\SetMathAlphabet{\mathcal}{normal}{OMS}{xmdcmsy}{m}{n}
\newtheorem{theorem}{Theorem}
\newtheorem{corollary}[theorem]{Corollary}
\newtheorem{lemma}[theorem]{Lemma}
\newtheorem{definition}[theorem]{Definition}
\newtheorem{example}[theorem]{Example}
\renewcommand{\theenumi}{\roman{enumi}}
\renewcommand{\p@enumii}{\theenumi--}
\newcommand{\ul}[1]{\underline{#1}}
\newcommand{\Fo}{\mathsf{1}}
\newcommand{\BBox}{{\Box\hspace{0.1em}}}
\newcommand{\mpair}[2]{#1 \gtrdot #2} %
\newcommand{\sMp}[2]{\textcolor{black}{(\mpair{#1}{#2}, \mu)}} %
\newcommand{\eMp}[4]{\textcolor{black}{(\mpair{#1}{#2},#3,#4,\mu)}} %
\newcommand{\MM}{\mathcal{M}}
\newcommand{\rDF}[1]{Definition~\ref{#1}}
\newcommand{\rCO}[1]{Corollary~\ref{#1}}
\newcommand{\rEX}[1]{Example~\ref{#1}}
\newcommand{\rLES}[2]{\rLE{#1} \rSub{#2}}
\newcommand{\rLE}[1]{Lemma~\ref{#1}}
\newcommand{\lFI}[1]{\label{Figure:#1}}
\newcommand{\rFI}[1]{\text{Figure~\ref{Figure:#1}}}
\newcommand{\rTH}[1]{Theorem~\ref{#1}}
\newcommand{\rSC}[1]{Section~\ref{#1}}
\newcommand{\rSub}[1]{{(\ref{#1})}}
\newcommand{\RR}{\mathcal{R}}
\newcommand{\SSS}{\mathcal{S}}
\newcommand{\QQ}{\mathcal{Q}}
\newcommand{\VV}{\mathcal{V}}
\newcommand{\WW}{\mathcal{W}}
\newcommand{\PP}{\mathcal{P}}
\newcommand{\Ff}{\mathsf{f}}
\newcommand{\Fsnd}{\mathsf{2nd}}
\newcommand{\Fg}{\mathsf{g}}
\newcommand{\Fh}{\mathsf{h}}
\newcommand{\Fa}{\mathsf{a}}
\newcommand{\Pos}{\mathcal{P}\mathsf{os}}
\newcommand{\Fs}{\mathsf{s}}
\newcommand{\Fz}{\mathsf{0}}
\renewcommand{\neq}{\mathrel{/\kern-1.9ex=}}
\newcommand{\monus}[1][\displaystyle]{\mathrel{\smash{\!\!\!\!\!\begin{array}[c]{c}
#1\cdot \\[-2.2ex]
#1-\\[0.7ex]
\end{array}\!\!\!\!\!}}}
\newcommand{\oto}[1][]%
{\mathrel{\smash{\stackrel{\raisebox{2pt}{\scriptsize $\mathsf{o}\:$}}%
{\smash{\rightarrow}}}_{#1}}}
\newcommand{\loto}[1][]%
{\mathrel{\smash{\stackrel{\raisebox{2pt}{\scriptsize $\mathsf{lo}\:$}}%
{\smash{\rightarrow}}}_{#1}}}
\newcommand{\lto}[1][]%
{\mathrel{\smash{\stackrel{\raisebox{2pt}{\scriptsize $\mathsf{l}\:$}}%
{\smash{\rightarrow}}}_{#1}}}
\newcommand{\poto}[1][]%
{\mathrel{\smash{\stackrel{\raisebox{2pt}{\scriptsize $\mathsf{po}\:$}}%
{\smash{\rightarrow}}}_{#1}}}
\newcommand{\pito}[1][]%
{\mathrel{\smash{\stackrel{\raisebox{2pt}{\scriptsize $\mathsf{pi}\:$}}%
{\smash{\rightarrow}}}_{#1}}}
\newcommand{\moto}[1][]%
{\mathrel{\smash{\stackrel{\raisebox{2pt}{\scriptsize $\makebox[2mm]{$\mathsf{mo}$}\:$}}%
{\smash{\rightarrow}}}_{#1}}}
\newcommand{\mito}[1][]%
{\mathrel{\smash{\stackrel{\raisebox{2pt}{\scriptsize $\mathsf{mi}\:$}}%
{\smash{\rightarrow}}}_{#1}}}
\newcommand{\pto}[1][]%
{\mathrel{\smash{\stackrel{\raisebox{2pt}{\scriptsize $\mathsf{p}\:$}}%
{\smash{\rightarrow}}}_{#1}}}
\newcommand{\mto}[1][]%
{\mathrel{\smash{\stackrel{\raisebox{2pt}{\scriptsize $\mathsf{m}\:$}}%
{\smash{\rightarrow}}}_{#1}}}
\newcommand{\ito}[1][]%
{\mathrel{\smash{\stackrel{\raisebox{2pt}{\scriptsize $\mathsf{i}\:$}}%
{\smash{\rightarrow}}}_{#1}}}
\newcommand{\lito}[1][]%
{\mathrel{\smash{\stackrel{\raisebox{2pt}{\scriptsize $\mathsf{li}\:$}}%
{\smash{\rightarrow}}}_{#1}}}
\newcommand{\itos}[2]%
{\mathrel{\smash{\stackrel{\raisebox{2pt}{\scriptsize $\mathsf{i}\:$}}%
{\smash{\rightarrow}}}_{#1}^{#2}}}
\newcommand\nats{\mathbb{N}}
\newcommand{\rto}[1]{\to_{#1}}
\newcommand{\Pito}[1][]%
{\mathrel{\smash{\stackrel{\raisebox{2pt}{\scriptsize $\Pi\:$}}%
{\smash{\rightarrow}}}_{#1}}}
\newcommand{\muto}[1][]%
{\mathrel{\smash{\stackrel{\raisebox{2pt}{\scriptsize $\mu\:$}}%
{\smash{\rightarrow}}}_{#1}}}
\newcommand{\qto}[1][]%
{\mathrel{\smash{\stackrel{\raisebox{3pt}{\scriptsize $\QQ\:$}}%
{\smash{\rightarrow}}}_{#1}}}
\newcommand{\tor}[1][]{\to_{#1}}
\renewcommand{\epsilon}{\varepsilon}
\newcommand{\Finf}{\mathsf{inf}}
\newcommand{\Ffact}{\mathsf{factorial}}
\newcommand{\Ffa}{\mathsf{fact}}
\newcommand{\Fif}{\mathsf{if}}
\newcommand{\Ftrue}{\mathsf{true}}
\newcommand{\Fnonloop}{\mathsf{nloop}}
\newcommand{\Ffalse}{\mathsf{false}}
\newcommand{\FFeq}{==}
\newcommand{\FFti}{\cdot}
\newcommand{\FFpl}{+}
\newcommand{\Fpl}[2]{#1 \FFpl #2}
\newcommand{\Feq}[2]{#1 \mathrel{\FFeq} #2}
\newcommand{\Fe}[2]{\mathsf{eq}(#1,#2)}
\newcommand{\Fcheck}{\mathsf{chk}}
\newcommand{\Fti}[2]{#1 \mathrel{\FFti} #2}
\newcommand{\br}[1]{(#1)}
\newcommand{\arraytag}[1][]{& \stepcounter{equation} (\arabic{equation})\ifthenelse{\equal{#1}{}}{}{\newcounter{#1}\setcounter{#1}{\value{equation}}}}
\newcommand{\reftag}[1]{(\arabic{#1})}
\newcommand\TTTT{%
 \ensuremath{\textsf{T\kern-0.2em\raisebox{-0.3em}T\kern-0.2emT\kern-0.2em\raisebox{-0.3em}2}}%
}
\newcommand{\eat}[1]{}
\newcommand{\drawcontexton}[3]{%
\draw (#1) -- ++(-2,-1) -- ++(4,0) -- node[above right] {$C$} (#1);
\node[shape = rectangle, transform shape, draw=black,fill=white, yshift=-1cm,xshift=0.9cm] (#2) at (#1) {};
\draw[snake=snake, line before snake=2mm,segment length=1cm] (#1) to node[left] {$p$} (#2);
\coordinate[yshift=-1cm,xshift=-1.3cm] (#3) at (#1);
\coordinate[yshift=-5mm, xshift = -5mm] (#1zwi) at (#1);
\problemD[iii]{#1zwi};
}
\newcommand{\drawterml}[3]{%
\draw (#1) -- node[right] {$t$} ++(1,-1.4) -- ++(-2,0) -- (#1);
\coordinate[yshift=-7mm,xshift=-3mm] (#1wi) at (#1);
\coordinate[yshift=-7mm,xshift=+1.5mm] (#1zwit) at (#1);
\coordinate[yshift=-1.4cm,xshift=0.45cm] (#2) at (#1);
\coordinate[yshift=-1.4cm,xshift=-0.75cm] (#3) at (#1);
\draw[snake=snake, line before snake=2mm, segment length=0.6cm] (#1) to node[left] {$q$} (#1zwit);
\redex{#1zwit};
\problemD[i]{#1wi}
\draw (#1zwit) -- ++(-0.5,-0.7) -- +(1,0) -- (#1zwit);
}
\newcommand{\drawmub}[2]{%
\draw (#1) -- node[right] {$\mu$} ++(0.5,-0.6) -- ++(-1,0) -- (#1);
\coordinate[yshift=-0.6cm] (#2) at (#1);
}
\newcommand{\drawmup}[3][]{%
\draw (#2) -- node[right] {$\mu$} ++(0.5,-0.6) -- ++(-1,0) -- (#2);
\coordinate[yshift=-0.4cm,xshift=-0.5mm] (#2zwi) at (#2);
\problemD[#1]{#2zwi} circle (2pt);
\coordinate[yshift=-0.6cm] (#3) at (#2);
}
\tikzstyle{problem}=[fill=white,draw=black]
\tikzstyle{redex}=[fill=black]
\newcommand{\problemD}[2][B]{
\fill[problem] ($(#2) + (-1mm,0)$) circle (2pt);
\fill[redex] ($(#2) + (-1mm,0)$) circle (0.5pt);
\node[node distance=2.0mm,right of=#2] {\tiny(\ref{#1})};
}
\newcommand{\redex}[2]{
\fill[redex] (#1) circle (2pt);
}
\newcommand{\drawcontext}[3]{%
\draw (#1) -- ++(-2,-1) -- ++(4,0) -- (#1);
\node[yshift=-0.6cm] at (#1) {\ensuremath{C}};
\node[shape = rectangle, transform shape, draw=black,fill=white, yshift=-1cm,xshift=+0.9cm] (#2) at (#1) {};
\coordinate[yshift=-1cm,xshift=-1.3cm] (#3) at (#1);
}
\newcommand{\drawterm}[2]{%
\draw (#1) -- ++(1,-1) -- ++(-2,0) -- (#1);
\node[yshift=-0.6cm] at (#1) {\ensuremath{t}};
\coordinate[yshift=-1cm,xshift=-0.3cm] (#2) at (#1);
}
\newcommand{\drawmu}[2]{%
\fill (#1) circle (2pt);
\draw (#1) -- ++(0.5,-0.6) -- ++(-1,0) -- (#1);
\node[yshift=-0.35cm] at (#1) {\ensuremath{\mu}};
\coordinate[yshift=-0.6cm] (#2) at (#1);
}
\title{Loops under Strategies \dots{} Continued}
\author{%
Ren\'e Thiemann\thanks{%
  These authors are supported by the FWF (Austrian Science Fund) project P22767-N13.
}
\quad\qquad
Christian Sternagel\setcounter{footnote}{0}\footnotemark{}
\institute{University of Innsbruck\\ Austria}
\email{\{rene.thiemann, christian.sternagel\}@uibk.ac.at}
\and
J\"urgen Giesl\thanks{%
 This author is supported by the DFG (German Research Foundation)
 project GI 274/5-3.
}
\institute{RWTH Aachen University\\ Germany}
\email{giesl@informatik.rwth-aachen.de}
\and
Peter Schneider-Kamp
\institute{University of Southern Denmark\\ Denmark}
\email{petersk@imada.sdu.dk}
}
\begin{document}

\maketitle

\begin{abstract}
While there are many approaches for automatically \emph{proving}
termination of term rewrite systems, up to now there exist only few
techniques to \emph{disprove} their termination automatically. Almost all
of these techniques try to find \emph{loops}, where the existence of a loop
implies non-termination of the rewrite system.  However, most programming
languages use specific \emph{evaluation strategies}, whereas loop detection
techniques usually do not take strategies into account. So even if a
rewrite system has a loop, it may still be terminating under certain
strategies.

Therefore, our goal is to develop decision procedures which can determine
whether a given loop is also a loop under the respective evaluation
strategy. In earlier work, such procedures were presented for the strategies of
innermost, outermost, and context-sensitive evaluation. In the current
paper, we build upon this work and develop such decision procedures
for important strategies like leftmost-innermost, leftmost-outermost,
(max-)parallel-innermost, (max-)parallel-outermost, and forbidden patterns
(which generalize  innermost, outermost, and context-sensitive strategies).
In this way, we obtain the first  approach to disprove termination under
these strategies automatically.%
\end{abstract}

\section{Introduction}
\label{s:intro}

Termination is an important property of term rewrite systems (TRSs).
Therefore, much effort has been spent on developing and automating
techniques for showing termination of TRSs.  However, in order to detect
bugs, it is at least as important to prove \emph{non-termination}.
Note
that for rewriting under a strategy, the strategy has to be taken into account
when
checking for non-termination. The reason is that a TRS which is
non-terminating when ignoring the strategy may still be
terminating when considering the strategy. Thus, it is important to develop automated
techniques to disprove termination of TRSs under strategies. 

Most of the techniques for showing non-termination detect \emph{loops} (for
example, 
\cite{Frocos05,GKM83,Kurth,LM78,Pay08,MATCHBOX,TORPA_JAR}). For a TRS $\RR$,
a loop is a derivation of the form \mbox{$t \to^+_\RR C[t\mu]$} for some
context $C$ and some substitution $\mu$.  To prove non-termination
under a strategy $\SSS$, we may use a complete
transformation $T_\SSS$ (e.g., \cite{GM04,Trafo,SOFSEM09}) where a TRS $\RR$
terminates under the strategy $\SSS$ iff the TRS $T_\SSS(\RR)$ terminates
when ignoring the strategy. After applying such a transformation, we may
try to find a loop in the transformed system $T_\SSS(\RR)$.  However, there
are some drawbacks: The first problem is an increased search space, as
loops of $\RR$ are often transformed into much longer loops in
$T_\SSS(\RR)$.  Moreover, the complete transformations from
\cite{GM04,Trafo,SOFSEM09} translate a loop $t \to_\RR^+ C[t\mu]$ into a
non-looping infinite derivation in $T_\SSS(\RR)$, whenever $C \neq \BBox$.
These two problems were solved in \cite{RTA08,RTA09} by decision
procedures which, given a loop in the original system $\RR$, directly
decide whether the loop is also a loop under the respective strategy.  Here,
\cite{RTA08} treats the innermost strategy whereas \cite{RTA09} deals with
the context-sensitive \cite{Luc00} and the outermost strategy.
Another problem is the availability of complete transformations.  For
the leftmost-innermost, parallel-innermost, and max-parallel-innermost
strategy we know by \cite{Kri00} that a TRS is terminating under one of
these strategies iff it is innermost terminating. Thus, we can use the
decision procedure for innermost loops \cite{RTA08} to disprove
termination under these strategies.\footnote{By performing all steps in
a parallel-innermost step one after another, one can easily show
that innermost loopingness is equivalent to parallel-innermost loopingness.
Moreover, by \cite{Kri00} an innermost loop  implies
leftmost-innermost and max-parallel innermost non-termination. Yet, this does neither imply
leftmost-innermost nor max-parallel-innermost loopingness. 
As an example, consider $\RR' = \{\Fa \to
\Ff(\Fnonloop,\Fa)\} \cup \RR$, where $\Fnonloop$ is a non-terminating, but
non-looping term w.r.t.\ $\RR$. Then $\RR'$ is innermost looping but neither
leftmost-innermost looping nor max-parallel-innermost looping. 
This might be a motivation to also develop decision
procedures for the various innermost strategies. Since the decision procedures
for leftmost-, parallel-, and max-parallel-\emph{outermost} loops directly
also give us decision procedures for the corresponding \emph{innermost}
strategies, we will mention these results in the paper as well.} However, we are not aware
of any complete transformation for the strategies leftmost-outermost,
parallel-outermost, and max-parallel-outermost.  Therefore, in this paper
we build upon the direct methods of \cite{RTA08,RTA09} and give decision
procedures for all these strategies (i.e., these procedures again decide
whether a loop is also a loop under the strategy).  Note that our decision procedures can
also be extended to the context-sensitive case, e.g., to the leftmost-innermost
context-sensitive strategy.

Finally, recently a generalization of
innermost\,/\,outermost\,/\,context-sensitive rewriting has been introduced:
\emph{rewriting with forbidden patterns} 
\cite{GS09}. In this paper we also
develop a decision procedure for loops under forbidden
patterns.

Before giving an overview on the contents of this paper, we present a
motivating example.
\begin{example}
\label{ex1}
Consider the following TRS (computing the factorial) which 
is a variant of \cite[Ex.\ 1]{RTA08}.\vspace{-1.5ex}
\begin{minipage}{0.55\textwidth}
\[
\begin{array}{r@{\ }lr}
\Ffact(y) & \to \Ffa(\Fz,y) \arraytag\\
\Ffa(x,y) & \to \Fif(\Feq{x}y,\Fs(\Fz),\Fti{\Ffa(\Fs(x),y)}{\Fs(x)}) \arraytag[fact]\\
\Fif(\Ftrue,x,y) & \to x \arraytag\\
\Fif(\Ffalse,x,y) & \to y \arraytag\\
\Fpl{\Fz}y & \to y \arraytag[plStart]\\
\Fpl{\Fs(x)}y & \to \Fs(\Fpl{x}y) \arraytag
\end{array}
\]
\end{minipage}
\begin{minipage}{0.45\textwidth}
\[
\begin{array}{r@{\ }lr}
\Fti{\Fz}y & \to \Fz \arraytag\\
\Fti{\Fs(x)}y & \to \Fpl{y}{\br{\Fti{x}y}} \arraytag\\
 {\Feq{x}y} & \to {\Fe{\Fcheck(x)}{\Fcheck(y)}} \arraytag[eqStart]\\
 {\Fe{x}x} & \to \Ftrue \arraytag[eRule]\\
{\Fcheck(x)} & \to \Ffalse \arraytag\\
 {\Fe{\Ffalse}y} & \to \Ffalse \arraytag[eqEnd]
\end{array}
\]
\end{minipage}

\medskip\noindent
Here, $\Ffa(x,y)$ computes $\prod_{x \leq z < y} (z+1) = (x+1) \cdot (x+2) \cdot \ldots \cdot y$. The intended strategy is leftmost-outermost. Otherwise,  rule
\reftag{fact} would directly cause non-ter\-mi\-na\-tion. Moreover, this strategy
is needed for the equality-test encoded by rules
\reftag{eqStart}--\reftag{eqEnd} (which takes at most three reductions). 
Nevertheless, we obtain the following looping leftmost-outermost reduction
(the respective redexes are underlined):
\begin{align*}
t & = \ul{\Ffa(x,y)}
\\ & \to \Fif(\ul{\Feq{x}y},\Fs(\Fz),\Fti{\Ffa(\Fs(x),y)}{\Fs(x)}) 
\\ & \to \Fif(\Fe{\ul{\Fcheck(x)}}{\Fcheck(y)},\Fs(\Fz),\Fti{\Ffa(\Fs(x),y)}{\Fs(x)}) 
\\ & \to \Fif(\ul{\Fe\Ffalse{\Fcheck(y)}},\Fs(\Fz),\Fti{\Ffa(\Fs(x),y)}{\Fs(x)}) 
\\ & \to \ul{\Fif(\Ffalse,\Fs(\Fz),\Fti{\Ffa(\Fs(x),y)}{\Fs(x)})}
\\ & \to \Fti{\Ffa(\Fs(x),y)}{\Fs(x)} 
\\& = C[t\mu]
\end{align*} 
where $\mu = \{x / \Fs(x) \}$ and $C = \Fti\BBox{\Fs(x)}$.  Applying our
new decision procedure developed in this paper
will show that the above loop indeed is a leftmost-outermost
loop, and hence, $\RR$ does not terminate under the leftmost-outermost
strategy.
\end{example}

The rest of the paper is structured as follows: In \rSC{s:strategy loops}
we give the necessary preliminaries.
Afterwards, in \rSC{s:leftmost loops}, we treat the special case of
leftmost loops. Next, in \rSC{s:maxpar loops}, we consider parallel as well as
max-parallel loops. Subsequently,
we handle the more complicated
case of loops under forbidden patterns in \rSC{s:forbidden loops}.
 Finally, in \rSC{s:concl}, we conclude.

\section{Preliminaries}
\label{s:strategy loops}
We only regard finite signatures  and TRSs
and refer to \cite{BN98} for the basics of rewriting. We use $\ell$, $r$,
$s$, $t$, $u$ for terms, $f$, $g$ for function symbols, $x$, $y$ for variables,
$\mu$, $\sigma$ for substitutions, $i$, $j$, $k$, $n$, $m$ for natural
numbers, $o$, $p$, $q$ for positions, 
and $C$, $D$ for
contexts. Here, contexts are terms which contain exactly one hole $\BBox$. 
A position $p$ is left of $q$ iff $p = o \, i \, p'$, $q = o \, j \, q'$, and $i < j$.
The set of variables is denoted by $\VV$.

Throughout this paper we assume a fixed TRS $\RR$ and we write $t \tor[p]
s$ if one can reduce $t$ to $s$ at position $p$ with $\RR$, i.e., $t =
C[\ell\sigma]$ and $s = C[r\sigma]$ for some rule $\ell \to r \in \RR$,
substitution $\sigma$, and context $C$ with $C|_p = \BBox$. In this case,
the term $\ell\sigma$ is called a redex at position $p$.  The reduction is
leftmost\,/\,innermost\,/\,outermost, written $t
\lto[p]$\;/\;$\ito[p]$\;/\;$\oto[p] s$, iff $p$ is a
leftmost\,/\,innermost\,/\,outermost position of $t$ where $t|_p$ is
a redex.  The leftmost-innermost reduction is defined as ${\lito[p]}
= {\lto[p]} \cap {\ito[p]}$.  Similarly, the leftmost-outermost reduction
is ${\loto[p]} = {\lto[p]} \cap {\oto[p]}$.  If the position is irrelevant
we just write $\tor$, $\lto$, $\ito$, $\oto$, $\lito$, and $\loto$,
respectively.

We also consider parallel reductions. Here, $t \pto[q_1,\dots,q_k] s$ is a parallel reduction iff
$k > 0$, the $q_i$'s are pairwise parallel positions, and $t \rto{q_1} \dots \rto{q_k} s$.
The max-parallel reduction relation is defined by $t \mto[q_1,\dots,q_k] s$
iff $t \pto[q_1,\dots,q_k] s$ 
and $t$ has no further redex 
at a position that is parallel to all positions $q_1,\dots,q_k$. 
The (max-)parallel-innermost reduction is defined by $t \mito / \pito[q_1,\dots,q_k] s$ iff $t \mto / \pto[q_1,\dots,q_k] s$
and all redexes $t|_{q_i}$ are innermost redexes. The (max-)parallel-outermost reductions $\moto$ and $\poto$ are defined
analogously.

To shortly illustrate the difference between the strategies, observe that
for the TRS $\RR$ of \rEX{ex1}, $\Feq{x}{y} \ito^* / \lito^* / \mito^* /
\oto^* / \moto^* \Ftrue$ whereas $\Feq{x}{y} \mathrel{/\kern-1.9ex\loto^*}
\Ftrue$. Moreover, $\Feq{\Fz}{\Fz} \ito^* / \lito^* / \mito^* \Ffalse$ but
$\Feq\Fz\Fz \oto^* / \loto^* / \moto^* \Ffalse$ is not possible.

Next, we consider rewriting under \emph{forbidden patterns}.

\begin{definition}[Rewriting under forbidden patterns \cite{GS09}]
A \emph{forbidden pattern} is a triple $(\ell,o,\lambda)$ for a term
$\ell$, position \mbox{$o \in \Pos(\ell)$}, and $\lambda \in \{h,a,b\}$.
For a set $\Pi$ of forbidden patterns the induced rewrite relation $\Pito$
is defined by $t \Pito_p s$ iff $t \to_p s$ and there is no pattern
$(\ell,o,\lambda) \in \Pi$ such that there exist a position $o' \in
\Pos(t)$, a substitution $\sigma$ with $t|_{o'} = \ell\sigma$, and
\begin{itemize}
\item $p = o'o$, if $\lambda = h$,
\item $p < o'o$, if $\lambda = a$, and
\item $p > o'o$, if $\lambda = b$.
\end{itemize}
\end{definition}

So a forbidden pattern $(\ell,o,h)$ means that the redex may not be at
position $o$ in a subterm of the form $\ell \sigma$. Similarly,
$(\ell,o,a)$ and $(\ell,o,b)$ mean that the redex may not be strictly
above and not strictly below position $o$ in a subterm of the form $\ell
\sigma$, respectively. 

Several strategies are expressible using $\Pito$ \cite{GS09}:
Innermost rewriting is obtained by setting $\Pi = \{(\ell,\epsilon,a) \mid
\ell \to r \in \RR\}$, outermost rewriting by using $\Pi =
\{(\ell,\epsilon,b) \mid \ell \to r \in \RR\}$, $\QQ$-restricted-rewriting
\cite{LPAR04} by $\Pi = \{(\ell,\epsilon,a) \mid \ell \to r \in
\QQ\}$, and context-sensitive-rewriting \cite{Luc00} w.r.t.\ the
replacement map $\mu$ can be expressed by $\Pi =
\{(f(x_1,\ldots,x_n),i,\lambda) \mid f \in \Sigma, i \notin
\mu(f), \lambda \in \{h,b\}\}$, where $\Sigma$ is the set of all function
symbols of the signature.

However, even more sophisticated examples can be treated by forbidden patterns.
\begin{example}
\label{2nd}
Consider the following TRS from \cite{GS09,Luc01}.
\begin{align*}
\Finf(x) & \to x : \Finf(\Fs(x)) \\
\Fsnd(x : (y : zs)) & \to y
\end{align*}
This TRS is not weakly normalizing, but still some terms like $\Fsnd(\Finf(\Fz))$ have a normal form.
One purpose of forbidden patterns is to restrict the rewrite relation in such a way that the restriction
is terminating, but that all normal forms are still being reached. Here, context-sensitive
rewriting is too restrictive, since forbidding rewriting in the second
argument of ``$:$''
would not
allow the reduction $\Fsnd(\Finf(\Fz)) \to \Fsnd(\Fz : \Finf(\Fs(\Fz))) \to \Fsnd( \Fz : (\Fs(\Fz) : \Finf(\Fs(\Fs(\Fz)))))
\to \Fs(\Fz)$. However, we can use rewriting with forbidden patterns where
$\Pi$ only contains the pattern $(x : (y : \Finf(z)), 2.2, h)$. Note that
$(x : (y : \Finf(z)))|_{2.2} = \Finf(z)$. 
Then, $\Pito$ is terminating, but the above reduction is still allowed.
\end{example}

A TRS $\RR$ is \mbox{non-terminating} iff there is an infinite derivation
$t_1 \tor t_2 \tor \cdots$. It is leftmost-innermost / leftmost-outermost /
 parallel-innermost / parallel-outermost /
max-parallel-innermost / max-parallel-outermost / forbidden pattern  non-terminating iff there is
such an infinite derivation using $\lito$ / $\loto$ / $\pito$ /
$\poto$ / $\mito$ / $\moto$  / $\Pito$ instead of $\tor$.  To describe the infinite
derivation that is induced by a loop, we use context-substitutions.

\begin{definition}[Context-substitutions \cite{RTA09}]
A \emph{context-substitution} is a pair $(C,\mu)$ consisting of
a context $C$ and a substitution $\mu$. The $n$-fold application of 
$(C,\mu)$ to a term $t$, written $t(C,\mu)^n$, is defined as follows.
\begin{xalignat*}{3} 
t(C,\mu)^{0} & = t & t(C,\mu)^{n+1} &= C[t(C,\mu)^n\mu]
\end{xalignat*}
\end{definition}
\begin{wrapfigure}[8]%
{r}{5.65cm}
\vspace*{-.5cm}
\begin{tikzpicture}[scale=0.8]
\begin{scope}[transform shape]
\coordinate (start)  at (0,0);
\drawcontext{start}{h0}{ma0}
\drawcontext{h0}{h1}{m0}
\drawcontext{h1}{h2}{m1}
\drawterm{h2}{m2}
\drawmu{m1}{m5}
\drawmu{m2}{m3}
\drawmu{m3}{m4}
\drawmu{m0}{m00}
\drawmu{m4}{m40}
\drawmu{m5}{m50}
\node at  (-3,0) {}; 
\end{scope}
\end{tikzpicture}
\caption{\lFI{illustration}The term $t(C,\mu)^3$}
\end{wrapfigure}

\noindent
For example, $t(C,\mu) = C[t\mu]$, 
$t(C,\mu)^2 = C[C[t\mu]\mu] = C[C\mu[t\mu^2]]$, etc. So in general,
in $t(C,\mu)^n$, the context~$C$ is added $n$-times above
$t$ and $t$ is instantiated by $\mu^n$. Note that also the added contexts
are instantiated by $\mu$.
For the term~$t(C,\mu)^3$ this is 
illustrated in \rFI{illustration}.
Context-substitutions have similar
properties to contexts and substitutions.
\begin{lemma}[Properties of context-substitutions \cite{RTA09}]
\label{properties}\ 
\begin{enumerate}
\item\label{prop1} $t(C,\mu)^{n}\mu = t\mu(C\mu,\mu)^{n}$.
\item\label{prop2} $t(C,\mu)^{m}(C,\mu)^{n} = t(C,\mu)^{m+n}$.
\item\label{prop3} If $C|_p = \BBox$ then $t(C,\mu)^{n}|_{p^n} = t\mu^n$.
\item\label{prop4} Whenever $t \to_{q} s$ and $C|_{p} = \BBox$ then
  $t(C,\mu)^{n} \to_{p^{n}q} s(C,\mu)^{n}$.
\end{enumerate}
\end{lemma}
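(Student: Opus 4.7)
The plan is to prove all four parts by straightforward induction on $n$ (on $n$ in part (ii) as well, keeping $m$ fixed). The key ingredients are the defining recursion $t(C,\mu)^{n+1} = C[t(C,\mu)^n\mu]$, the fact that substitution distributes over contexts via $(C[s])\sigma = C\sigma[s\sigma]$, the position identity $p^{n+1} = p \cdot p^n$, and the stability of $\to$ under substitution application and under context insertion. None of the parts appears to present a real obstacle; the only thing to be careful about is the position bookkeeping in parts (iii) and (iv).

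For part (i), I would induct on $n$. The base case $n=0$ is $t\mu = t\mu$. For the step, I expand
\[
t(C,\mu)^{n+1}\mu = (C[t(C,\mu)^n\mu])\mu = C\mu[\,t(C,\mu)^n\mu\cdot\mu\,],
\]
apply the induction hypothesis to rewrite $t(C,\mu)^n\mu$ as $t\mu(C\mu,\mu)^n$, and recognize the right-hand side as $t\mu(C\mu,\mu)^{n+1}$. For part (ii), the base case $n=0$ is immediate, and in the inductive step I unfold the outer application, use the induction hypothesis on $t(C,\mu)^m(C,\mu)^n$, and refold:
\[
t(C,\mu)^m(C,\mu)^{n+1} = C[\,t(C,\mu)^m(C,\mu)^n\mu\,] = C[\,t(C,\mu)^{m+n}\mu\,] = t(C,\mu)^{m+n+1}.
\]

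For part (iii), again by induction on $n$: the base case uses $p^0 = \varepsilon$ and $\mu^0 = \emptysubst$. For the step, since $C|_p = \BBox$ implies $C[s]|_p = s$ for any $s$, I get
\[
t(C,\mu)^{n+1}|_{p^{n+1}} = C[t(C,\mu)^n\mu]\,|_{p\cdot p^n} = (t(C,\mu)^n\mu)|_{p^n} = (t(C,\mu)^n|_{p^n})\mu = t\mu^n\mu = t\mu^{n+1},
\]
where the penultimate equality applies the induction hypothesis. For part (iv), induct on $n$ once more. The base case is the assumption $t \to_q s$, since $p^0 q = q$. In the step, the induction hypothesis yields $t(C,\mu)^n \to_{p^n q} s(C,\mu)^n$; applying $\mu$ preserves the rewrite step at the same position, and then plugging both sides into $C[\cdot]$ shifts the redex position by $p$ (the position of the hole in $C$), giving
\[
C[t(C,\mu)^n\mu] \to_{p\cdot p^n q} C[s(C,\mu)^n\mu],
\]
which by definition is $t(C,\mu)^{n+1} \to_{p^{n+1}q} s(C,\mu)^{n+1}$, as required.

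The proofs of (i) and (ii) are pure algebraic manipulation with the recursive definition. The only mildly delicate point is in (iii)–(iv): one must check that the position $p^n$ is indeed well-defined inside $t(C,\mu)^n\mu$ (it is, because the context layers $C\mu^i$ all satisfy $(C\mu^i)|_p = \BBox$), so that restriction to position $p^n$ and substitution application commute as used above.
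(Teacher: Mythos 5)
Your proof is correct, and since the paper only cites this lemma from earlier work (giving no proof of its own beyond an informal gloss), your four routine inductions on $n$ — using $(C[s])\mu = C\mu[s\mu]$, the recursion $t(C,\mu)^{n+1}=C[t(C,\mu)^n\mu]$, and stability/monotonicity of rewriting — are exactly the expected argument. Your closing remark correctly identifies the one point needing care, namely that $p^n$ is a genuine position of $t(C,\mu)^n$ so that taking the subterm at $p^n$ commutes with applying $\mu$.
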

Here, property \rSub{prop1} is similar to the fact that $C[t]\mu =
C\mu[t\mu]$, and \rSub{prop2} shows that context-substitutions can be
combined just like substitutions where $\mu^m\mu^n = \mu^{m+n}$.
Property (iii) shows that the $n$-fold application of $(C,\mu)$ to $t$ yields
a term containing the $n$-fold application of $\mu$ to $t$.
Finally, stability and monotonicity of
rewriting are used to show in \rSub{prop4} that rewriting is closed under
context-substitutions. 
Using context-substitutions we can now concisely 
present the infinite derivation resulting from a loop $t \tor^{+} C[t\mu] = t(C,\mu)$.
\begin{equation*}
\label{looping reduction}
t(C,\mu)^{0}
 \tor^{+} t(C,\mu)^0(C,\mu) = t(C,\mu)^{1}
 \tor^{+} 
 \cdots
 \tor^{+} t(C,\mu)^n
 \tor^{+}
 \cdots
\end{equation*}
So for every $n$, the
positions of the reductions in the loop are
prefixed by an 
additional $p^{n}$ where $p$ is the position of the hole in $C$, 
cf.~\rLES{properties}{prop4}.

\begin{definition}[$\SSS$-loops \cite{RTA09}]
\label{Sloop}
Let $\SSS$ be a strategy.\footnote{In this paper we use a rather liberal definition of a strategy:
a strategy is just a restriction of the rewrite relation.} %
A loop $t_1 \to_{q_1} t_2 \to_{q_2} \cdots \to_{q_m} t_{m+1} = t_1(C,\mu)$ 
with $C|_p = \BBox$ is an \emph{$\SSS$-loop} iff
the reduction $t_i(C,\mu)^n \to_{p^nq_i} t_{i+1}(C,\mu)^n$ respects the
strategy $\SSS$ for all $i \leq m$ and all $n \in \mathbb{N}$.
\end{definition}

As a direct consequence of \rDF{Sloop}, we can conclude that every $\SSS$-loop
of a rewrite system $\RR$ proves non-termination of $\RR$ under the strategy
$\SSS$. Moreover, \rDF{Sloop} also shows that being a loop is a modular property 
in the following sense.
\begin{corollary}[Loops of intersection strategies]
\label{loops intersection}
Let $\SSS$, $\SSS_1$, and $\SSS_2$ be strategies such that 
${\stackrel{\SSS}\to_p} = {\stackrel{\SSS_1}\to_p} \cap {\stackrel{\SSS_2}\to_p}$
for all positions $p$.
Then a loop is an $\SSS$-loop iff it is both an $\SSS_1$-loop and an $\SSS_2$-loop.
\end{corollary}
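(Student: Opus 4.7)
The plan is to unfold \rDF{Sloop} and observe that the defining condition is a universally quantified statement about single rewrite steps, so that a set-theoretic intersection on rewrite relations translates directly into a conjunction of two such universal conditions.

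Concretely, fix any loop $t_1 \to_{q_1} t_2 \to_{q_2} \cdots \to_{q_m} t_{m+1} = t_1(C,\mu)$ with $C|_p = \BBox$. By \rDF{Sloop}, being an $\SSS$-loop means that for every $i \in \{1,\dots,m\}$ and every $n \in \nats$, the step $t_i(C,\mu)^n \to_{p^nq_i} t_{i+1}(C,\mu)^n$ belongs to $\stackrel{\SSS}\to_{p^nq_i}$. The assumption ${\stackrel{\SSS}\to_{p'}} = {\stackrel{\SSS_1}\to_{p'}} \cap {\stackrel{\SSS_2}\to_{p'}}$ (taken at $p' = p^nq_i$) then yields that this is equivalent to the conjunction of the two conditions that the same step belongs to $\stackrel{\SSS_1}\to_{p^nq_i}$ and to $\stackrel{\SSS_2}\to_{p^nq_i}$.

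Since this equivalence holds for each individual pair $(i,n)$, I can simply quantify universally over all $i \leq m$ and $n \in \nats$ on both sides. The left-hand side of the equivalence then becomes exactly the statement that the loop is an $\SSS$-loop, while the right-hand side factors (because a universally quantified conjunction is the conjunction of the universally quantified parts) into: for all $i,n$ the step respects $\SSS_1$, and for all $i,n$ the step respects $\SSS_2$. By \rDF{Sloop} applied to $\SSS_1$ and $\SSS_2$, these two conditions are precisely that the given loop is an $\SSS_1$-loop and an $\SSS_2$-loop, respectively.

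There is no real obstacle in this argument; the only mild point worth stressing is that \rDF{Sloop} quantifies over positions \emph{and} over unfolding depths $n$, so the hypothesis on the strategies is used not just for the positions $q_i$ of the original loop but for all prefixed positions $p^nq_i$ that arise in the unfolded infinite derivation. This is exactly why the hypothesis is stated uniformly for \emph{all} positions $p$.
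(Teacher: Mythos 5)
Your proof is correct and takes the same route as the paper, which states the corollary as a direct consequence of \rDF{Sloop} without further elaboration: unfolding the definition, noting that the intersection hypothesis applies at each prefixed position $p^nq_i$, and distributing the universal quantification over the conjunction is exactly the intended argument made explicit. Your closing remark about why the hypothesis must hold for all positions (not just the $q_i$) is a worthwhile observation.
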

Hence, to decide whether a loop is leftmost-innermost / leftmost-outermost, 
we just require a decision procedure for leftmost loops and a decision procedure
for innermost / outermost loops. As decision procedures for innermost loops and 
outermost loops have already been developed \cite{RTA08,RTA09}, it remains to
construct a decision procedure for leftmost loops (see \rSC{s:leftmost
loops}).

For rewriting with forbidden patterns, we observe that ${\Pito_p} = 
\bigcap_{(\ell,o,\lambda) \in \Pi} {\singlearrow{\{(\ell,o,\lambda)\}}_p}$,
and hence, by \rCO{loops intersection} it suffices to consider loops w.r.t.\
single forbidden patterns which is the content of \rSC{s:forbidden loops}.

\section{Leftmost Loops}
\label{s:leftmost loops}

\begin{wrapfigure}[14]{r}{9cm}
\vspace{-8mm}
\hfill
\begin{tikzpicture}
\begin{scope}[transform shape]
\coordinate (start)  at (0,0);
\drawcontexton{start}{h0}{ma0}
\drawcontexton{h0}{h1}{m0}
\drawcontexton{h1}{h2}{m1}
\drawterml{h2}{m2}{m6}
\drawmup[iv]{m1}{m5}
\drawmub{m2}{m3}
\drawmub{m3}{m4}
\drawmup[ii]{m6}{m7}
\drawmup[ii]{m7}{m8}
\drawmup[ii]{m8}{m9}
\drawmup[iv]{m0}{m00}
\drawmub{m4}{m40}
\drawmup[iv]{m5}{m50}
\end{scope}
\end{tikzpicture}
\hfill\,
\caption{\lFI{leftmost}Leftmost redexes}
\end{wrapfigure}

Recall the definition of $\lto$. A leftmost reduction of all terms
$t(C,\mu)^n$ at positions $p^nq$ requires that for no $n$ there is a redex
at a position left of $p^nq$.  This is illustrated in \rFI{leftmost}: The
reduction of the subterm at the black position $p^nq$ respects the leftmost
strategy iff $p^nq$ is leftmost. This is the case whenever there are no
redexes at positions $\odot$.

We want to be able to decide whether all $p^nq$ point to
leftmost redexes in the term $t(C,\mu)^n$.  There are four possibilities why $p^nq$ might not point
to a leftmost redex in that term. These cases are marked with (i)-(iv) in  \rFI{leftmost}.
\begin{enumerate}
  \item \label{i}
  There might be a redex within $t\mu^n$ at a position $q' \in \Pos(t)$
  which is left of $q$. Hence, we have to consider all finitely many
  subterms $u = t|_{q'}$ where $q'$ is left of $q$ and guarantee that
  $u\mu^n$ is no redex. 

  \item \label{ii}
  There might be a redex within $t\mu^n$ at a position $q' \in \Pos(t\mu^n)
  \setminus \Pos(t)$ which is left of $q$.  Hence, this redex is of the
  form $u\mu^k$ for some $k \leq n$ and some subterm $u \unlhd x\mu$  where
  $x$ is a variable that occurs within some of $v$, $v\mu$, $v\mu^2$,
  $\dots$ for some subterm $v = t|_{q'}$ where $q'$ is left of $q$.\footnote{
  It does not suffice to only consider the variables $x$ that occur in $v$
  and $v\mu$. This can be seen for $v = y$ and $\mu = \{ y / y_1, y_1 / y_2, y_2 / y_3, 
  \dots y_{n-1} / x, x / f(\dots)\}$. Here, $x$ does neither occur in $v$ nor
  in $v\mu$, but in $v\mu^n$.
  Hence,  the potential redex $f(\dots)$ is detected 
  only after $n$ iterations.}
  Note that there are only finitely many such variables $x$ and hence, again we
  obtain a finite set of terms where for each of these terms $u$ and each
  $n$ we have to guarantee that $u\mu^n$ is not a redex.	

  \item \label{iii}
  There might be a redex where the root is within $C$ and left of the path
  $p$.  Here, we have to consider all finitely many subterms $u = C|_{p'}$
  where $p'$ is left of $p$ and guarantee that $u\mu^n$ is not a redex. 

  \item \label{iv}
  In analogy to \rSub{ii} we also have to consider redexes within $\mu$
  where now the variables $x$ are taken from the subterms $u = C|_{p'}$ where
$p'$ is left of $p$.
\end{enumerate}

\noindent
To summarize, we generate a finite set $U$ of terms $u$ such that  (a) and (b)
 are equivalent:
\begin{itemize}
\item[(a)]
    For every $n$, the reduction $t(C,\mu)^n \to_{p^nq} t'(C,\mu)^n$ is
leftmost.
\item[(b)]
    There is no $u \in U$ and no number $n$ such that $u\mu^n$ is a redex.
\end{itemize}
Note that the question whether $u\mu^n$ is a redex for some $n$ can be
formulated as the kind of matching problem that was encountered for
deciding innermost loops.
\begin{definition}[Matching problems \cite{RTA08}]
A \emph{matching problem} is a pair 
$\sMp u\ell$. 
It is \emph{solvable} iff there are $n$ and $\sigma$ such that $u\mu^n = \ell\sigma$.
\end{definition}
Thus, following the possibilities (i) - (iv) above, we can formally define
a set of matching problems to analyze leftmost reductions.  
\begin{definition}[Leftmost matching problems]
\label{leftmost problems}
The set of \emph{leftmost matching problems} for a reduction $t \to_q t'$ and a context-substitution $(C,\mu)$ with $C|_p = \BBox$ 
is defined as the set consisting of:
\begin{align*}
\text{$\sMp{u}\ell$} & \text{ for each $\ell \to r \in \RR$ and $q' \in \Pos(t)$ where $q'$ is left of $q$, and $u = t|_{q'}$}  \\
\text{$\sMp{u}\ell$} & \text{ for each $\ell \to r \in \RR$ and $q' \in \Pos(t)$ where $q'$ is left of $q$,
 $x \in \bigcup_{i \in \nats} \VV(t|_{q'}\mu^i)$, and $u \unlhd x\mu$} \\
\text{$\sMp{u}\ell$} & \text{ for each $\ell \to r \in \RR$ and $p' \in \Pos(C)$ where $p'$ is left of $p$, and $u = C|_{p'}$}  \\
\text{$\sMp{u}\ell$} & \text{ for each $\ell \to r \in \RR$ and $p' \in \Pos(C)$ where $p'$ is left of $p$,
 $x \in \bigcup_{i \in \nats} \VV(C|_{p'}\mu^i)$, and $u \unlhd x\mu$}
\end{align*}
\end{definition}
Note that the sets of variables in the second and fourth case are finite and
can easily be computed.
The above considerations prove the following theorem.
\begin{theorem}[Soundness of leftmost matching problems]
\label{leftmost MPs correct}
Let $t \to_q t'$ and let $(C,\mu)$ be a context-sub\-sti\-tu\-tion such that $C|_p = \BBox$. 
All reductions $t(C,\mu)^n \to_{p^nq} t'(C,\mu)^n$ are leftmost 
iff none of the leftmost matching problems for $t \to_q t'$ and $(C,\mu)$
is solvable.
\end{theorem}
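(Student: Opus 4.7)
The plan is to prove the biconditional by contrapositive on both sides: some $n$ makes the reduction $t(C,\mu)^n \to_{p^nq} t'(C,\mu)^n$ non-leftmost iff some leftmost matching problem is solvable. The backbone of both directions is the structural picture of $t(C,\mu)^n$ obtained from \rLES{properties}{prop3}: at each position $p^k$ with $k \leq n$ the subterm equals $t(C,\mu)^{n-k}\mu^k$, so along the $p$-spine we see nested $C\mu^k$-layers with $t\mu^n$ sitting at the bottom. The spine position $p^nq$ is leftmost in $t(C,\mu)^n$ exactly when no position strictly to the left carries a redex, and I would analyse such a position according to where it first diverges from the spine.

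For the direction ``solvable $\Rightarrow$ non-leftmost'' I would treat the four kinds of problems in \rDF{leftmost problems} separately. Given a solution $u\mu^m = \ell\sigma$, cases (i) and (iii) immediately place a redex at position $p^m q'$ in $t(C,\mu)^m$ or at $p^m p'$ in $t(C,\mu)^{m+1}$, both of which lie strictly left of the respective spine by choice of $q'$ or $p'$. For the variable cases (ii) and (iv) I would first fix an $i$ with $x \in \VV(t|_{q'}\mu^i)$ (resp.\ $\VV(C|_{p'}\mu^i)$), so that an occurrence of $x$ appears below $q'$ (resp.\ $p'$) after $i$ iterations of $\mu$; then after a further $m+1$ applications, the subterm $u\mu^m = \ell\sigma$ materialises at a position left of the spine in some $t(C,\mu)^n$.

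For the converse ``non-leftmost $\Rightarrow$ solvable'' I would start with a redex at position $\pi$ left of $p^nq$ in $t(C,\mu)^n$ and consider the longest common prefix $\omega$ of $\pi$ and $p^nq$. If $\omega$ is a strict prefix of $p^n$, say $\omega = p^k \cdot \omega'$, the divergence happens inside the $(k{+}1)$-st $C$-layer and leads to case (iii) or (iv); otherwise $\omega$ extends $p^n$ and the divergence is inside $t\mu^n$ at a position left of $q$, leading to case (i) or (ii). Within each branch a second split decides whether the diverging subposition lies in $\Pos(C)$ (resp.\ $\Pos(t)$) before instantiation, yielding the ``direct'' matching problem, or is created by a variable that becomes visible only after some number of $\mu$-applications, yielding the ``variable'' matching problem with $u \unlhd x\mu$.

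The main obstacle, and the reason for the footnote in \rDF{leftmost problems}, is the bookkeeping of variables in cases (ii) and (iv): the entire reachable set $\bigcup_{i \in \nats} \VV(t|_{q'}\mu^i)$ must be considered, not just $\VV(t|_{q'}) \cup \VV(t|_{q'}\mu)$, because a redex can first appear after an arbitrarily long chain $y \mapsto y_1 \mapsto \dots \mapsto x$ introduced by successive applications of $\mu$. I would argue that this union is nevertheless finite because $\mu$ has finite support, so iterating $\mu$ from a finite starting set of variables stabilises after finitely many steps; this finiteness is what makes the generated set of matching problems a decidable witness for leftmostness.
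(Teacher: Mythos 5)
Your proposal is correct and follows essentially the same route as the paper: the paper's proof consists precisely of the four-case analysis (i)--(iv) of where a redex left of the spine position $p^nq$ can occur (in $t$, below a variable reachable from $t|_{q'}$ via iterated $\mu$, in $C$, or below a variable reachable from $C|_{p'}$), which is exactly your case split by the longest common prefix with $p^nq$, together with the translation of each case into a matching problem $\sMp{u}{\ell}$ via \rLE{properties}. Your additional remarks on why the full union $\bigcup_{i\in\nats}\VV(t|_{q'}\mu^i)$ is needed yet finite likewise match the paper's footnote and its observation that this set is computable by iterating $\mu$ until stabilisation.
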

Using \rTH{leftmost MPs correct} in combination with the decision procedures for matching
problems yields the following corollary.
\begin{corollary}[Leftmost loops are decidable]
\label{leftmost loops}
Let there be a loop $t_1 \to_{q_1} t_2 \to_{q_2} \cdots \to_{q_m} t_{m+1} = t_1(C,\mu)$ 
with $C|_p = \BBox$. Then it is decidable whether the loop is a leftmost loop. 
\end{corollary}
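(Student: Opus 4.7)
The plan is to reduce the decision problem to the solvability of finitely many matching problems and then invoke the known decision procedure from \cite{RTA08}. First, by \rDF{Sloop} the given loop is a leftmost loop iff for every $i \in \{1,\dots,m\}$ and every $n \in \nats$ the reduction $t_i(C,\mu)^n \to_{p^n q_i} t_{i+1}(C,\mu)^n$ respects the leftmost strategy. Applying \rTH{leftmost MPs correct} to each of the $m$ one-step reductions $t_i \to_{q_i} t_{i+1}$ together with the common context-substitution $(C,\mu)$ reduces this universal statement over $n$ to checking, for every $i$, that none of the leftmost matching problems associated to that step via \rDF{leftmost problems} is solvable.

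Next I would verify that all four collections in \rDF{leftmost problems} are effectively finite, so that only finitely many problems arise in total. For each $i$, the positions of $t_i$ left of $q_i$ and of $C$ left of $p$ form finite sets, and $\RR$ itself is finite. The one point requiring a short argument is the variable set $\bigcup_{j \in \nats} \VV(s\mu^j)$ used in the second and fourth clauses for $s = t_i|_{q'}$ or $s = C|_{p'}$. This set is the least set $V$ containing $\VV(s)$ and satisfying $\VV(x\mu) \subseteq V$ for all $x \in V$; since it is a subset of the finite set $\VV(s) \cup \bigcup_{x \in \Dom(\mu)} \VV(x\mu)$, a straightforward saturation procedure computes it in finitely many steps. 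Given this finite set, only finitely many subterms $u \unlhd x\mu$ need to be enumerated per position and per rule.

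Finally, solvability of matching problems is decidable by \cite{RTA08}, so the overall procedure is: enumerate the (finitely many) leftmost matching problems generated for each of the $m$ steps, test each for solvability, and return ``leftmost loop'' iff every test returns ``unsolvable''. Most of the technical content has already been absorbed in \rTH{leftmost MPs correct}; the residual obstacle is the effective computability of the variable closures just discussed, which is the step I would expect to require the most care in a fully formal write-up, although as sketched it follows from a standard saturation argument.
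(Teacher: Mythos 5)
Your proposal is correct and follows essentially the same route as the paper: reduce the question to the finitely many leftmost matching problems of Theorem~\ref{leftmost MPs correct} for each of the $m$ steps and decide their solvability with the procedure of \cite{RTA08}. The only part you elaborate beyond the paper's one-line argument is the effective computability of the variable closures $\bigcup_{j}\VV(s\mu^j)$, which the paper also asserts (``finite and can easily be computed'') and which your saturation argument justifies correctly.
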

Combining \rCO{leftmost loops} and \rCO{loops intersection} with the decision procedures 
for innermost and outermost loops of \cite{RTA08,RTA09} yields decision procedures
which determine whether a given loop is a leftmost-innermost loop or a leftmost-outermost loop:
for each loop construct the leftmost matching problems, ensure that all these matching
problems are not satisfiable (then leftmost reductions are guaranteed), and
moreover use the 
decision procedures of \cite{RTA08,RTA09} to further ensure that the loop is an innermost 
or outermost loop.

\begin{corollary}[Leftmost-innermost and leftmost-outermost loops are decidable]
\label{corl}
Let there be a loop $t_1 \to_{q_1} t_2 \to_{q_2} \cdots \to_{q_m} t_{m+1} = t_1(C,\mu)$ 
with $C|_p = \BBox$. Then the following two questions are decidable.
\begin{itemize}
\item Is the loop a leftmost-innermost loop?
\item Is the loop a leftmost-outermost loop? 
\end{itemize}
\end{corollary}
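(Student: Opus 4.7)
The plan is a straightforward combination of \rCO{loops intersection} with \rCO{leftmost loops} and the existing decision procedures for innermost and outermost loops from \cite{RTA08,RTA09}. Recall that in \rSC{s:strategy loops} the leftmost-innermost and leftmost-outermost reductions were defined by ${\lito[p]} = {\lto[p]} \cap {\ito[p]}$ and ${\loto[p]} = {\lto[p]} \cap {\oto[p]}$, respectively. These identities hold uniformly in the position $p$, so they are exactly of the form required by \rCO{loops intersection}.

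First I would invoke \rCO{loops intersection} with $\SSS_1$ being the leftmost strategy and $\SSS_2$ being the innermost strategy (respectively the outermost strategy). This reduces the question ``is the given loop a leftmost-innermost loop?'' to the conjunction of two separate questions: ``is it a leftmost loop?'' and ``is it an innermost loop?''. Analogously, being a leftmost-outermost loop is equivalent to being both a leftmost loop and an outermost loop.

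Next, I would dispatch each of the two subquestions independently. The leftmost part is decidable by \rCO{leftmost loops}, which in turn reduces (via \rTH{leftmost MPs correct}) to checking the unsolvability of the finitely many leftmost matching problems arising from the individual steps $t_i \to_{q_i} t_{i+1}$ of the loop together with the context-substitution $(C,\mu)$. The innermost part is decidable by the procedure of \cite{RTA08}, and the outermost part by the procedure of \cite{RTA09}. In all cases the decision procedures take precisely the same data as we have available (the loop together with $(C,\mu)$ and $p$), so no additional preparation is needed.

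Finally, I would combine the two yes/no answers by a conjunction, yielding a decision procedure for leftmost-innermost (respectively leftmost-outermost) loopingness of the given loop. There is no real obstacle here, since the bulk of the work has already been done: the non-trivial ingredient is \rCO{leftmost loops} from this section, while the modularity observation \rCO{loops intersection} is exactly what makes the combination with the previously developed decision procedures for innermost and outermost loops immediate.
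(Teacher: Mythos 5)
Your proposal is correct and matches the paper's own argument exactly: the paper likewise applies \rCO{loops intersection} to the definitions ${\lito[p]} = {\lto[p]} \cap {\ito[p]}$ and ${\loto[p]} = {\lto[p]} \cap {\oto[p]}$, decides the leftmost part via \rCO{leftmost loops} (i.e., unsolvability of the leftmost matching problems), and delegates the innermost/outermost part to the decision procedures of \cite{RTA08,RTA09}. No gaps.
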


\begin{example}
Using \rCO{corl}, we can decide that the loop given in \rEX{ex1} is a
leftmost loop, since for this loop, the set of leftmost matching problems
is empty (as there is never a position left of the used redex). 
Moreover, by the results of \cite{RTA08,RTA09} we can decide
that the loop is an outermost loop, but not an innermost loop.  Hence, the
loop is a leftmost-outermost loop, but not a leftmost-innermost loop.
\end{example}

\begin{example}
\label{innermost loop}
We consider the following loop for the TRS of \rEX{ex1}
\begin{align*}
t & = \ul{\Ffa(x,y)}
\\ & \to \Fif(\ul{\Feq{x}y},\Fs(\Fz),\Fti{\Ffa(\Fs(x),y)}{\Fs(x)}) 
\\ & \to \Fif(\Fe{\ul{\Fcheck(x)}}{\Fcheck(y)},\Fs(\Fz),\Fti{\Ffa(\Fs(x),y)}{\Fs(x)}) 
\\ & \to \Fif(\Fe\Ffalse{\ul{\Fcheck(y)}},\Fs(\Fz),\Fti{\Ffa(\Fs(x),y)}{\Fs(x)}) 
\\ & \to \Fif(\ul{\Fe\Ffalse\Ffalse},\Fs(\Fz),\Fti{\Ffa(\Fs(x),y)}{\Fs(x)}) 
\\ & \to \Fif(\Ffalse,\Fs(\Fz),\Fti{\Ffa(\Fs(x),y)}{\Fs(x)}) 
\\& = C[t\mu]
\end{align*} 
  where $C = \Fif(\Ffalse,\Fs(\Fz),\Fti\BBox{\Fs(x)})$ and $\mu = \{x / \Fs(x)\}$.
  We decide that this loop is a leftmost loop by constructing
  the leftmost matching problems
  \begin{itemize}
  \item $\sMp \Ffalse\ell$ for all left-hand sides $\ell$ (due to the reduction $\Fif(\Fe\Ffalse{\ul{\Fcheck(y)}},\dots) \to \dots$)
  \item $\sMp \Ffalse\ell$, $\sMp \Fz\ell$, and $\sMp{\Fs(\Fz)}\ell$ for all left-hand sides $\ell$ (since $C = \Fif(\Ffalse,\Fs(\Fz),\Fti\BBox{\dots})$)
  \end{itemize}
  and observing that none of them is solvable. This loop is also an innermost loop, but not an 
  outermost loop and hence, it is a leftmost-innermost loop, but not a leftmost-outermost loop.
\end{example}

Whereas in the previous two examples it is rather easy to see that the loops are leftmost, since the leftmost matching
problems are trivially not solvable, we now 
present two more examples where the resulting matching problems are more involved.

\begin{example}
Consider the TRS
\begin{align*}
\Ff(x,y,z) & \to \Fh(\Fg(x,y), \Ff(y,z,z)) \\
\Fg(x,x) & \to x
\end{align*}
and the loop $t = \Ff(x,y,z) \to \Fh(\Fg(x,y), \Ff(y,z,z)) = C[t\mu]$ for $C = \Fh(\Fg(x,y),\BBox)$ and $\mu = \{x/y,y/z\}$.
Here, we construct the non-solvable leftmost matching problems $\sMp{u}\ell$ for all left-hand sides $\ell$ and $u \in \{x,y,z\}$. But additionally
we construct the leftmost matching problem $\sMp{\Fg(x,y)}{\Fg(x,x)}$ which is solvable, since $\Fg(x,y)\mu^2 = \Fg(y,z)\mu = \Fg(z,z)
= \Fg(x,x)\sigma$ for $\sigma = \{x/z\}$. Hence, the loop is not a leftmost loop.
\end{example}

\begin{example}
Consider the TRS
\begin{align*}
\Ff(x,y,z) & \to \Fh(\Fg(x), \Ff(y,z,\Fs(x))) \\
\Fg(\Fs(\Fs(\Fs(x)))) & \to x
\end{align*}
and the loop $t = \Ff(x,y,z) \to \Fh(\Fg(x), \Ff(y,z,\Fs(x))) = C[t\mu]$ for $C = \Fh(\Fg(x),\BBox)$ and $\mu = \{x/y,y/z,z/\Fs(x)\}$.
Here, we construct the non-solvable leftmost matching problems $\sMp{u}\ell$ for all left-hand sides $\ell$ and $u \in \{x,y,z,\Fs(x)\}$. But additionally
we construct the leftmost matching problem $\sMp{\Fg(x)}{\Fg(\Fs(\Fs(\Fs(x))))}$ which is solvable, 
since $\Fg(x)\mu^9 = \Fg(\Fs(\Fs(\Fs(x))))$. Hence, the loop is not a leftmost loop.
\end{example}

\section{Parallel and Max-Parallel Loops}
\label{s:maxpar loops}

For
the parallel innermost\,/\,outermost strategies it suffices to use the decision 
procedures for innermost- and outermost loops. The reason is that 
$t(C,\mu)^n \pto[p^nq_1,\dots,p^nq_k] t'(C,\mu)^n$ is a $\pito$\,/\,$\poto$-reduction
iff for every $1 \leq i \leq k$ there is some $s_i$ such that $t(C,\mu)^n \rto{p^nq_i} s_i$ 
is an innermost\,/\,outermost reduction. 

Hence, for the rest of the section we consider the max-parallel strategies $\mito$ and
$\moto$. Again, the innermost or outermost aspect can be decided by the 
respective decision procedures using a variant of \rCO{loops intersection}
where one allows parallel rewrite steps. 
It remains to consider the max-parallel 
aspect, i.e., we have to decide whether 
$t(C,\mu)^n \mto[p^nq_1,\dots,p^nq_k] t'(C,\mu)^n$ for all $n$.

Here, we essentially proceed as in the leftmost case, where we replace the condition
that some position is left of $p$ or $q$ by the condition that it is parallel to $p$
or to each $q_i$.

\begin{definition}[Max-parallel matching problems]
\label{parallel problems}
The set of \emph{max-parallel matching problems} for a reduction 
$t \pto[q_1,\dots,q_k] t'$ and a context-substitution $(C,\mu)$ with $C|_p = \BBox$ 
is defined as the set consisting of:
\begin{align*}
\text{$\sMp{u}\ell$} & \text{ for each $\ell \to r \in \RR$ and $q' \in
\Pos(t)$ where $q'$ is parallel to all positions $q_i$, and $u = t|_{q'}$}  \\
\text{$\sMp{u}\ell$} & \text{ for each $\ell \to r \in \RR$ and $q' \in \Pos(t)$  where $q'$ is parallel to all $q_i$,
 $x \in \bigcup_{i \in \nats} \VV(t|_{q'}\mu^i)$, and $u \unlhd x\mu$} \\
\text{$\sMp{u}\ell$} & \text{ for each $\ell \to r \in \RR$ and $p' \in
\Pos(C)$ where $p'$ is parallel to $p$, and $u = C|_{p'}$}  \\
\text{$\sMp{u}\ell$} & \text{ for each $\ell \to r \in \RR$ and $p' \in \Pos(C)$ where  $p'$ is parallel to $p$,
 $x \in \bigcup_{i \in \nats} \VV(C|_{p'}\mu^i)$, and $u \unlhd x\mu$}
\end{align*}
\end{definition}
Using this finite set of matching problems we again obtain a decision procedure.
\begin{theorem}[Soundness of max-parallel matching problems]
\label{parallel MPs correct}
Let $t \pto[q_1,\dots,q_k] t'$ and let $(C,\mu)$ be a context-substitution
such that $C|_p = \BBox$. 
All reductions $t(C,\mu)^n \pto[p^nq_1,\dots,p^nq_k] t'(C,\mu)^n$ are max-parallel
iff none of the max-parallel matching problems for
$t \pto[q_1,\dots,q_k] t'$ and $(C,\mu)$
 is solvable.
\end{theorem}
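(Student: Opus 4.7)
The plan is to follow the same template as the proof of \rTH{leftmost MPs correct} almost verbatim, replacing the phrase ``left of'' everywhere by ``parallel to'' (to $p$ or to every $q_i$). Observe first that the plain parallel reduction $t(C,\mu)^n \pto[p^nq_1,\dots,p^nq_k] t'(C,\mu)^n$ is automatic for every $n$: by \rLES{properties}{prop4}, each single step $t_i \to_{q_j} t_{i+1}$ lifts to $t_i(C,\mu)^n \to_{p^nq_j} t_{i+1}(C,\mu)^n$, and prefixing all positions by the common prefix $p^n$ preserves pairwise parallelism. Hence, the only content of the theorem is that the \emph{maximality} side-condition---that $t(C,\mu)^n$ admits no further redex at a position parallel to every $p^nq_i$---holds for all $n$ iff none of the max-parallel matching problems is solvable.

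To set up the case analysis, I would use the structural picture $t(C,\mu)^n = C[\,C\mu[\,\dots[C\mu^{n-1}[t\mu^n]]\dots]]$: any position $o$ in this term factors uniquely as $o = p^j \cdot o'$ with $j \in \{0,\dots,n\}$, where $o' \in \Pos(C\mu^j)$ not below the hole if $j < n$, and $o' \in \Pos(t\mu^n)$ if $j = n$. A short calculation using $p^n q_i = p^j \cdot p^{n-j} \cdot q_i$ shows that $o$ is parallel to every $p^nq_i$ precisely when $o'$ is parallel to $p$ (if $j < n$), respectively parallel to every $q_i$ (if $j = n$).

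For the direction $(\Rightarrow)$, contrapositively, suppose some max-parallel matching problem $\sMp{u}{\ell}$ is solvable with witness $(n,\sigma)$, so $u\mu^n = \ell\sigma$. Depending on which of the four clauses of \rDF{parallel problems} produced it, I would pick $N$ large enough together with the layer $j \in \{0,\dots,N\}$ dictated by that clause (clauses 1--2 use $j = N$ at the bottom copy of $t\mu^N$; clauses 3--4 use some $j < N$ inside a copy of $C\mu^{N-1-j}$). The data $q'$ (respectively $p'$) in the clause together with the variable $x$ in clauses 2 and 4 pinpoint a position $o = p^j \cdot o'$ with $t(C,\mu)^N|_{o} = u\mu^n = \ell\sigma$ a redex, and by the paragraph above $o$ is parallel to every $p^Nq_i$. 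This refutes max-parallelism of step $N$.

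For the direction $(\Leftarrow)$, contrapositively, suppose step $n$ is not max-parallel, witnessed by a redex at position $o = p^j \cdot o'$ in $t(C,\mu)^n$ with $o$ parallel to every $p^nq_i$. I split into four cases exactly paralleling (i)--(iv) of \rSC{s:leftmost loops}:
(i) $j = n$ and $o' \in \Pos(t)$ with $o'$ parallel to each $q_i$: then $t(C,\mu)^n|_o = (t|_{o'})\mu^n$ is a redex, making the first clause solvable;
(ii) $j = n$ and $o' \notin \Pos(t)$: $o'$ strictly extends some $q' \in \Pos(t)$ with $t|_{q'}$ a variable $y$, and $q'$ is parallel to every $q_i$; the redex lives inside $y\mu^n$ and yields the second clause;
(iii) $j < n$ and $o' \in \Pos(C)$ parallel to $p$: gives the third clause via $C|_{o'}\mu^{n-1-j}$;
(iv) $j < n$ and $o' \notin \Pos(C)$: gives the fourth clause, symmetric to (ii). The main obstacle is the bookkeeping in cases (ii) and (iv): one has to identify the correct variable $x$ and exponent $k$ so that the redex acquires the claimed form $u\mu^k$ with $u \unlhd x\mu$. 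As the footnote in \rDF{leftmost problems} stresses, $x$ need not appear in $t|_{q'}$ or $t|_{q'}\mu$ but only in some later iterate $t|_{q'}\mu^i$; the argument proceeds by unfolding $y\mu^n = (y\mu^{n-1-k})\mu \mu^k$ and tracing the position $o'$ into the first layer where it enters a non-variable symbol, which determines $x \in \VV(t|_{q'}\mu^{n-1-k}) \subseteq \bigcup_{i \in \nats} \VV(t|_{q'}\mu^i)$ and the corresponding subterm $u \unlhd x\mu$. Once this lemma is in place, the equivalence of the two bullet points (a), (b) analogous to those stated before \rDF{leftmost problems} follows immediately, and the theorem is proved.
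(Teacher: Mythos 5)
Your proposal is correct and follows essentially the same route as the paper: the paper itself proves this theorem only by remarking that one proceeds exactly as in the leftmost case with ``left of'' replaced by ``parallel to (all of)'', relying on the four-case analysis (i)--(iv) of Section~\ref{s:leftmost loops}, which is precisely the template you instantiate. Your additional observations---that the plain parallel step lifts automatically via Lemma~\ref{properties}(\ref{prop4}) so only maximality is at stake, and the layer decomposition $o = p^j o'$ showing that parallelism to all $p^nq_i$ reduces to parallelism to $p$ (in a copy of $C$) or to all $q_i$ (in $t\mu^n$)---are faithful elaborations of the details the paper leaves implicit.
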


\begin{corollary}[Max-parallel loops are decidable]
Let $t_1 \pto[q^1_1,\dots,q^1_{k_1}] t_2 \pto[q^2_1,\dots,q^2_{k_2}] \cdots
\pto[q^m_1\dots q^m_{k_m}] t_{m+1}$ be a loop with $t_{m+1} = t_1(C, \mu)$
and $C|_p = \BBox$. Then the following questions are decidable.
\begin{itemize}
\item Is the loop a max-parallel loop?
\item Is the loop a parallel-innermost loop? Is it a max-parallel-innermost loop?
\item Is the loop a parallel-outermost loop? Is it a max-parallel-outermost loop?
\end{itemize}
\end{corollary}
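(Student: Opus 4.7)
The plan is to dispatch the five questions by combining \rTH{parallel MPs correct} with the decision procedures for innermost and outermost loops from \cite{RTA08,RTA09}, via a parallel-step version of \rCO{loops intersection}.

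For the first question (max-parallel), I would invoke \rTH{parallel MPs correct} directly: it reduces the question "is every $t_i(C,\mu)^n \pto[p^nq^i_1,\dots,p^nq^i_{k_i}] t_{i+1}(C,\mu)^n$ actually a max-parallel step?" to checking solvability of finitely many max-parallel matching problems, one family per index $i$. Since matching-problem solvability is decidable (cf.\ \cite{RTA08}), this yields a decision procedure. The sets of relevant variables $\bigcup_{i \in \nats} \VV(t|_{q'}\mu^i)$ and $\bigcup_{i \in \nats} \VV(C|_{p'}\mu^i)$ are finite and computable (they stabilize after finitely many iterations of $\mu$), so the matching-problem sets of \rDF{parallel problems} can indeed be constructed effectively.

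For the parallel-innermost and parallel-outermost questions, I would use the observation made at the start of \rSC{s:maxpar loops}: a parallel step $t(C,\mu)^n \pto[p^nq_1,\dots,p^nq_k] t'(C,\mu)^n$ is parallel-innermost (respectively parallel-outermost) iff each single-position step $t(C,\mu)^n \rto{p^nq_i} s_i$ is innermost (respectively outermost). Thus the given loop is a parallel-innermost loop iff for every $i \leq m$ and every $j \leq k_i$ the single-position instance of \rDF{Sloop} with strategy $\SSS = {\ito}$ holds; this is decidable by the innermost decision procedure of \cite{RTA08}. The parallel-outermost case is completely analogous using \cite{RTA09}.

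For the two max-parallel-innermost/outermost questions, I would apply a parallel-step variant of \rCO{loops intersection}: since ${\mito_{p^nq_1,\dots,p^nq_k}} = {\mto_{p^nq_1,\dots,p^nq_k}} \cap {\pito_{p^nq_1,\dots,p^nq_k}}$ position-wise, the loop is a max-parallel-innermost loop iff it is both a max-parallel loop and a parallel-innermost loop. Both have just been shown decidable, and their conjunction is as well. The max-parallel-outermost case is symmetric. I expect the main subtlety to be precisely formulating and justifying the parallel-step analogue of \rCO{loops intersection}, i.e., that $\mito$ and $\moto$ factor as intersections of the max-parallel restriction and the innermost/outermost restriction at the level of multi-position parallel rewrite relations; this is straightforward from the definitions of $\mito$ and $\moto$ in \rSC{s:strategy loops}, but it is the step where care is needed so as to avoid confusing the "being a redex at a parallel position" condition (which is the max-parallel content) with the "being innermost/outermost at the chosen positions" condition.
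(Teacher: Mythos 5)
Your proposal is correct and follows essentially the same route as the paper: the max-parallel question is settled by Theorem~\ref{parallel MPs correct} together with the decidability of matching problems, the parallel-innermost/outermost questions reduce to per-position single-step innermost/outermost checks as observed at the start of Section~\ref{s:maxpar loops}, and the max-parallel-innermost/outermost questions combine these via the parallel-step variant of Corollary~\ref{loops intersection}. Nothing essential is missing.
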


Note that in the corollary we did not list the question ``Is the loop a parallel loop?'' since
every loop is trivially also a parallel loop.

\begin{example}
It is easy to see that neither the loop of \rEX{ex1} nor the loop of
\rEX{innermost loop} is a max-parallel loop. The reason is that both loops
violate the max-parallel strategy already in the second reduction step.
However, the TRS of \rEX{ex1} is both max-parallel-outermost and -innermost looping which
is proved by the following two loops 
which could be obtained automatically using a loop detection technique 
and our decision procedure of \rTH{parallel MPs correct}.

The max-parallel-outermost loop needs two parallel reductions:

\vspace*{-.3cm}

{\footnotesize
\begin{align*}
t & = \Fif(\ul{\Fe\Ffalse\Ffalse},\Fo,\Fti{\Fif(\Fe{\ul{\Fcheck(\Fs(x))}}{\ul{\Fcheck(y)}},\Fo,
	\Fti{\Fif(\ul{\Feq{\Fs^2(x)}y},\Fo,\Fti{\ul{\Ffa(\Fs^3(x),y)}}{\Fs^3(x)})}{\Fs^2(x)})}{\Fs(x)})
\\  & \moto \ul{\Fif(\Ffalse,\Fo,\Fti{\Fif(\Fe\Ffalse\Ffalse,\Fo,
	\Fti{\Fif(\Fe{\Fcheck(\Fs^2(x))}{\Fcheck(y)},\Fo,\Fti{
	 \Fif(\Feq{\Fs^3(x)}y,\Fo,\Fti{\Ffa(\Fs^4(x),y)}{\Fs^4(x)})
	}{\Fs^3(x)})}{\Fs^2(x)})}{\Fs(x)})}
\\  & \moto \Fti{\Fif(\Fe\Ffalse\Ffalse,\Fo,
	\Fti{\Fif(\Fe{\Fcheck(\Fs^2(x))}{\Fcheck(y)},\Fo,\Fti{
	 \Fif(\Feq{\Fs^3(x)}y,\Fo,\Fti{\Ffa(\Fs^4(x),y)}{\Fs^4(x)})
	}{\Fs^3(x)})}{\Fs^2(x)})}{\Fs(x)}
\\& = C[t\mu]
\end{align*} 
}

\vspace*{-.3cm}

\noindent
where $C = \Fti\BBox{\Fs(x)}$, $\mu = \{x / \Fs(x)\}$, and where $\Fo$
abbreviates $\Fs(\Fz)$.  For the max-parallel-innermost loop one parallel
reduction suffices:
  {\footnotesize
\begin{align*}
t & = \Fif(\ul{\Fe\Ffalse\Ffalse},\Fo,\Fti{\Fif(\Fe{\ul{\Fcheck(\Fs(x))}}{\ul{\Fcheck(y)}},\Fo,
	\Fti{\Fif(\ul{\Feq{\Fs^2(x)}y},\Fo,\Fti{\ul{\Ffa(\Fs^3(x),y)}}{\Fs^3(x)})}{\Fs^2(x)})}{\Fs(x)})
\\  & \mito \Fif(\Ffalse,\Fo,\Fti{\Fif(\Fe\Ffalse\Ffalse,\Fo,
	\Fti{\Fif(\Fe{\Fcheck(\Fs^2(x))}{\Fcheck(y)},\Fo,\Fti{
	 \Fif(\Feq{\Fs^3(x)}y,\Fo,\Fti{\Ffa(\Fs^4(x),y)}{\Fs^4(x)})
	}{\Fs^3(x)})}{\Fs^2(x)})}{\Fs(x)})
\\& = C[t\mu]
\end{align*} 
}
  where $C = \Fif(\Ffalse,\Fo,\Fti\BBox{\Fs(x)})$ and $\mu = \{x / \Fs(x)\}$.
\end{example}

\section{Loops for Rewriting with Forbidden Patterns}
\label{s:forbidden loops}
For rewriting with forbidden patterns we have to investigate for given
$t$, $t'$, $C$, $\mu$ with $C|_p = \BBox$ and $t \to_q t'$, whether all reductions
$t(C,\mu)^n \to_{p^nq} t'(C,\mu)^n$ are allowed w.r.t.~some fixed forbidden pattern $(\ell,o,\lambda)$.
In other words, we have to check whether 
\begin{equation}
\label{fb}
\text{there are $n$, $o'$, and $\sigma$ with $t(C,\mu)^n|_{o'} = \ell\sigma$ and }
\begin{cases}
 p^nq = o'o, & \text{ if $\lambda = h$,} \\
 p^nq < o'o, & \text{ if $\lambda = a$, and} \\
 p^nq > o'o, & \text{ if $\lambda = b$.}
\end{cases}
\end{equation}
In the subsections \ref{subsection h}-\ref{subsection b}, we investigate the three cases of $\lambda$. We
show that for all of them, \rSub{fb} is decidable.  To this end, we reuse
algorithms that have been developed to decide innermost and outermost
loops.

\subsection{Deciding Loops for Forbidden Patterns of Type $(\cdot,\cdot,h)$}
\label{subsection h}
We start with the easiest case where $\lambda = h$.  Given $p$, $q$, and
$o$, here we first want to
figure out whether there are $n$ and $o'$ such that the condition $p^nq =
o'o$ of \rSub{fb} is satisfied. Obviously, once $n$ has been fixed, then
$o'$ is uniquely determined. Therefore, we first compute $n_0$ as the minimal 
value of $n$ such that $p^nq = o'o$ is
satisfied for some $o'$ and then uniquely determine $o_0'$ such 
that $p^{n_0}q = o_0'o$.

This can be done as follows. If $p = \epsilon$, then one can set $n_0 = 0$
and just has to determine whether $q$ has $o$ as a suffix. Otherwise, one
has to ensure that $p^nq$ is at least as long as $o$. This is done by
choosing $n_0 = \lceil\frac{|o| \monus[\scriptstyle] |q|}{|p|}\rceil$.
If there is an $n$ where  $\exists o'. p^nq = o'o$ can be satisfied,
then $n_0$ is the minimal such number.
Here, ``$\monus$'' is the subtraction on natural numbers where $x \monus y = \max(x-y,0)$.
Afterwards one
just checks whether $p^{n_0}q$ contains $o$ as suffix. If this holds, then
there is
obviously a unique $o_0'$ such that $p^{n_0}q = o_0'o$.
Otherwise, there cannot be any $n$ and $o'$ which satisfy $p^{n}q = o'o$.
The reason is that for any solution $p^{n}q = o'o$ we know that $n \geq
n_0$ and hence, $p^{n - n_0}p^{n_0}q = p^{n}q = o'o$ shows that $o$ is a
suffix of $p^{n_0}q$ as $|p^{n_0}q| \geq |o|$.

In this way we can compute the minimal number $n_0$ and the
corresponding $o_0'$ such that $p^{n_0}q = o_0'o$, or we detect that $p^nq
= o'o$ is unsatisfiable.  In the latter case we are finished since we know
that the forbidden pattern will not restrict any of the desired reductions.
In the former case we can represent the set of solutions  of $p^nq = o'o$
conveniently:
\[
\{(n,o') \mid p^nq = o'o\} = \{(k+n_0,p^k o_0') \mid k \in \nats \}
\]
Hence, it remains to check whether there are $k \in \nats$ and 
$\sigma$ with $t(C,\mu)^{k+n_0}|_{p^k o_0'} = \ell\sigma$. Note that this problem can be simplified
using \rLE{properties}:
\[
t(C,\mu)^{k+n_0}|_{p^k o_0'} 
= t(C,\mu)^{n_0}(C,\mu)^k|_{p^k}|_{o_0'}
= t(C,\mu)^{n_0}\mu^k|_{o_0'}
= (t(C,\mu)^{n_0}|_{o_0'}) \mu^k
\]
Thus,  for the concrete terms $u =
t(C,\mu)^{n_0}|_{o_0'}$ and $\ell$, we have to decide whether there are $k$ and $\sigma$ such that
$u\mu^k = \ell\sigma$. 

\begin{definition}[$(\ell,o,h)$ matching problems]
The set of \emph{$(\ell,o,h)$ matching problems} for a term $t$, a position
$q \in \Pos(t)$,
and a context-substitution $(C,\mu)$ with $C|_p = \BBox$ 
is defined as 
\begin{itemize}
\item the empty set, if there are no $n$ and $o'$ such that $p^nq = o'o$
\item $\{\sMp{t(C,\mu)^{n_0}|_{o_0'}}\ell\}$, otherwise, 
  where $n_0$ and $o_0'$ form the unique minimal solution to the
  equation $p^nq = o'o$
\end{itemize}
\end{definition}
By the discussion above, we have proved the following theorem. 
\begin{theorem}[Soundness of $(\ell,o,h)$ problems]
\label{h MPs correct}
Let $t \to_q t'$ and let $(C,\mu)$ be a context-substitution such that $C|_p = \BBox$. 
All reductions $t(C,\mu)^n \to_{p^nq} t'(C,\mu)^n$ are allowed
w.r.t.\ the pattern $(\ell,o,h)$ iff none of the $(\ell,o,h)$ matching 
problems for $t$, $q$,  and $(C,\mu)$ is solvable.
\end{theorem}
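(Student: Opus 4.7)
The plan is to unfold the definition of ``allowed w.r.t.\ the pattern $(\ell,o,h)$'' and decompose condition \rSub{fb} into its positional part $p^nq = o'o$ and its matching part $t(C,\mu)^n|_{o'} = \ell\sigma$. The theorem then reduces to showing that the existence of $n$, $o'$, $\sigma$ jointly satisfying both parts is equivalent to solvability of the (at most one) generated matching problem.

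First I would establish a characterization of all pairs $(n, o')$ with $p^nq = o'o$. If $p = \epsilon$, then $p^nq = q$ for every $n$, so the equation is solvable iff $o$ is a suffix of $q$; with $n_0 = 0$ and $o_0'$ the unique prefix, every $n$ gives a solution with $o' = p^k o_0' = o_0'$. If $p \neq \epsilon$, then any solution satisfies $|p^nq| \geq |o|$, forcing $n \geq n_0 := \lceil (|o| \monus |q|)/|p|\rceil$. The critical auxiliary claim is: the equation is solvable at all iff $o$ is a suffix of $p^{n_0}q$; in that case the unique minimal solution is $(n_0, o_0')$, and every further solution has the form $(k + n_0, p^k o_0')$ for some $k \in \nats$. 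For the ``only if'' direction, any solution $(n, o')$ with $n \geq n_0$ satisfies $p^n q = p^{n - n_0} (p^{n_0} q) = o'o$; comparing suffixes of length $|o|$ forces $o$ to be a suffix of $p^{n_0}q$ and determines $o' = p^{n - n_0} o_0'$.

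Once the solution set $\{(k + n_0, p^k o_0') \mid k \in \nats\}$ is in hand, the remaining step is a routine simplification via \rLE{properties}: property \rSub{prop2} gives $t(C,\mu)^{k + n_0} = t(C,\mu)^{n_0}(C,\mu)^k$, then \rSub{prop3} (using $C|_p = \BBox$) yields $t(C,\mu)^{n_0}(C,\mu)^k|_{p^k} = t(C,\mu)^{n_0}\mu^k$, so
\[
  t(C,\mu)^{k + n_0}|_{p^k o_0'} = \bigl(t(C,\mu)^{n_0}|_{o_0'}\bigr)\mu^k = u\mu^k
\]
for $u := t(C,\mu)^{n_0}|_{o_0'}$. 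Thus the existence of $n, o', \sigma$ witnessing \rSub{fb} is equivalent to the existence of $k \in \nats$ and $\sigma$ with $u\mu^k = \ell\sigma$, which is by definition solvability of the single matching problem $\sMp{u}{\ell}$. In the complementary case where $p^nq = o'o$ is unsatisfiable, both sides of the claimed equivalence are vacuously true: the matching problem set is empty and no $n$ can trigger the forbidden pattern.

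The main obstacle is the combinatorial lemma on the equation $p^nq = o'o$, in particular verifying that $n_0 = \lceil (|o| \monus |q|)/|p|\rceil$ is indeed minimal and that every solution then arises from $o_0'$ by prefixing a power of $p$. Once this positional bookkeeping is settled, the reduction to one matching problem via \rLE{properties} is a direct calculation.
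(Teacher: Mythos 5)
Your proposal is correct and takes essentially the same route as the paper's own argument: compute the minimal solution $(n_0,o_0')$ of $p^nq = o'o$ (treating $p = \epsilon$ and $p \neq \epsilon$ separately), characterize the full solution set as $\{(k+n_0,p^k o_0') \mid k \in \nats\}$, and then use \rLE{properties} to rewrite $t(C,\mu)^{k+n_0}|_{p^k o_0'}$ as $(t(C,\mu)^{n_0}|_{o_0'})\mu^k$, reducing \rSub{fb} to solvability of the single matching problem $\sMp{t(C,\mu)^{n_0}|_{o_0'}}{\ell}$. The vacuous case where $p^nq = o'o$ has no solution is handled exactly as in the paper.
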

Using \rTH{h MPs correct} in combination with the decision procedure of \cite{RTA08} 
for solvability of matching problems, one can decide whether all reductions 
$t(C,\mu)^n \to_{p^nq} t'(C,\mu)^n$ are allowed
w.r.t.\ the pattern $(\ell,o,h)$.

\begin{example}
We consider the TRS of \rEX{2nd}
and $\Pi = \{(x : (y : \Finf(z)), 2.2, h)\}$. Here, we have the looping reduction
$t = \Finf(x) \to x : \Finf(\Fs(x)) = C[t\mu]$ for $C = x : \BBox$ and
$\mu = \{x / \Fs(x)\}$. Hence, to investigate whether this loop is a
$\Pi$-loop, we have
$p = 2$ as the position of $\,\BBox$ in $C$,
$q = \epsilon$ since the reduction is on the root position of $t$,
and $o = 2.2$. Then we compute $n_0 = \lceil\frac{|o| \monus[\scriptstyle] |q|}{|p|}\rceil = 
\lceil\frac{2 \monus[\scriptstyle] 0}{1}\rceil = 2$ and observe that $p^{n_0}q = 2.2$ has $o = 2.2$ as a suffix, and set 
$o_0' = \epsilon$. Hence, we construct the matching problem $\sMp{t(C,\mu)^{n_0}|_{o_0'}}\ell
= \sMp{\Finf(x)(C,\mu)^2}\ell
= \sMp{x : (\Fs(x) : \Finf(\Fs(\Fs(x))))}{x : (y : \Finf(z))}$
which is solvable because $(x : (\Fs(x) : \Finf(\Fs(\Fs(x)))))\mu^n = 
(x : (y : \Finf(z)))\sigma$
by choosing $n = 0$ and $\sigma = \{y/\Fs(x), z / \Fs(\Fs(x))\}$.
Thus, by \rTH{h MPs correct} we know that this loop is not a $\Pi$-loop.
\end{example}
\subsection{Deciding Loops for Forbidden Patterns of Type $(\cdot,\cdot,a)$}
\label{subsection a}
Also for patterns of type $(\cdot,\cdot,a)$ we want to generate a finite set of matching
problems such that the loop respects a pattern $(\ell,o,a)$ iff none of these 
matching problems is solvable. 
Essentially, we replace the condition $p^nq = o'o$ of the previous subsection
by $p^nq < o'o$, i.e., $o'o$ must now be strictly below the redex. 

The plan is to systematically represent all terms $t(C,\mu)^n|_{o'}$
for all numbers $n$ and all positions $o'$ where $p^nq < o'o$.
We
consider two alternatives: either the term starts within $C^n[t]$ and not
in the substitutions below $t$, or the term starts within the substitutions
that are below $t$. To distinguish these possibilities, we define the finite
set of positions $\PP = \{q' \mid qq' \in \Pos(t)\}$.  Then  the first
alternative corresponds to the constraint $o' \leq p^nqq'$ for some $q'
\in \PP$, and the second alternative corresponds to the constraint $o' >
p^nqq'$ for some maximal position $q' \in \PP$.

For the first alternative, we start to fix the unknown $n$ by choosing $n_0 = 0$ if $p = \epsilon$,
and $n_0 = \lceil\frac{|o| \monus[\scriptstyle] |q|}{|p|}\rceil$ otherwise. 
We will show later that if $\exists o'.  p^n q < o'o$ can be satisfied by some $n$,
then it can also be satisfied using some $n\geq n_0$. For $n \geq n_0$, we
will see that
$t(C,\mu)^{n}|_{o'}$ must be 
of the form $t(C,\mu)^{n_0}|_{o''}\mu^k$ for some $o''$ and $k$.
Hence, we build the finite set of matching problems 
\[\MM_1 = \{ \sMp{t(C,\mu)^{n_0}|_{o''}}\ell 
\mid o'' \leq p^{n_0}qq' \wedge q' \in \PP \wedge p^{n_0}q < o''o\}.\]

For the second alternative where $o' >
p^nqq'$ for some maximal $q' \in \PP$, we first define the set 
$\WW = \bigcup_{k \in \nats}\VV(t|_q\mu^k)$ of variables 
that can occur below $t|_q$ when applying $\mu$ an arbitrary number
of times. Note that for substitutions with finite domains, $\WW$ is finite
and can easily be computed by iteratively applying $\mu$ on $t|_q$ until no
new variables appear. We define the second set of matching problems as
\[
\MM_2 = \{ \sMp u\ell \mid u \unlhd x\mu \wedge x \in \WW\}.
\]

We will show soundness of these matching problems by the following key lemma
which handles both alternatives.

\begin{lemma}[Connection of \rSub{fb} and $\MM_1 \cup \MM_2$]
\label{a lemma}  Let $t$ be a term, $q \in \Pos(t)$, and let $(C,\mu)$ be a context-substitution
 such that $C|_p = \BBox$ and such that $t|_q$ is not a variable.
\begin{enumerate}
\item If \rSub{fb} is satisfied with $o' \leq p^n q q'$ for some $q' \in \PP$, then a problem in $\MM_1$ is solvable.
\item If \rSub{fb} is satisfied with $o' > p^n q q'$ for some maximal $q' \in \PP$, then a problem in $\MM_2$ is solvable.
\item If a problem in $\MM_1 \cup \MM_2$ is solvable then \rSub{fb} is satisfied. 
\end{enumerate}
\end{lemma}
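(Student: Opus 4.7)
I would prove the three parts separately. Part~(iii), the backward direction, is the most straightforward, so I tackle it first. Given a solvable $\MM_1$-problem $\sMp{t(C,\mu)^{n_0}|_{o''}}\ell$ with witnesses $k,\sigma$, set $n := n_0 + k$ and $o' := p^k o''$; using \rLES{properties}{prop2} and \rLES{properties}{prop3}, one computes $t(C,\mu)^n|_{o'} = (t(C,\mu)^{n_0}\mu^k)|_{o''} = (t(C,\mu)^{n_0}|_{o''})\mu^k = \ell\sigma$, where the last equality holds because every strict prefix of $o''$ is a function-symbol position of $t(C,\mu)^{n_0}$ (which follows from $o''$ being a prefix of $p^{n_0}qq'$ with $qq'\in\Pos(t)$). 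The side condition $p^nq < o'o$ reduces to the given $p^{n_0}q < o''o$ after cancelling $p^k$. For a solvable $\MM_2$-problem $\sMp{u}\ell$ with $u \unlhd x\mu$ and $x \in \VV(t|_q\mu^i)$, I pick $s$ with $t|_q\mu^i|_s = x$; since $t|_q$ is not a variable, $s\neq\epsilon$. Writing $u = x\mu|_r$, the witness $n := i + 1 + k$ and $o' := p^n q s r$ verifies \rSub{fb}.

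For part~(i), I first normalise the witness: as long as $n < n_0$, replace $(n,o',\sigma)$ by $(n+1, p\cdot o', \sigma\mu)$; both $o'\le p^n qq'$ and $p^nq < o'o$ are invariant under prepending $p$, so we may assume $n\ge n_0$. Since $o'$ and $p^n$ are both prefixes of $p^n qq'$, they are comparable. In Case~A ($o'\ge p^n$) I write $o' = p^n q''$ with $q''\le qq'$ and set $o'' := p^{n_0}q''$; the $\MM_1$-conditions are immediate and the matching problem is solvable with $k := n-n_0$. In Case~B ($o' < p^n$) I write $o' = p^j u$ with $j<n$ and $u$ a proper prefix of $p$. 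From $p^nq < o'o = p^j u o$ we extract $p^{n-j}q < uo$, which gives $(n-j)|p| + |q| < |u| + |o| \le |p| - 1 + |o|$, hence $n-j \le n_0$. Setting $j' := n_0-(n-j)\ge 0$ and $o'' := p^{j'}u$, the identity $t(C,\mu)^m|_{p^j u} = (C\mu^j|_u)[t(C,\mu)^{m-j-1}\mu^{j+1}]$ applied to both $m = n$ and $m = n_0$ (with $j'$ in place of $j$) shows $t(C,\mu)^{n_0}|_{o''}\mu^{n-n_0} = t(C,\mu)^n|_{o'} = \ell\sigma$.

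For part~(ii), maximality of $q'$ forces $t|_{qq'}$ to be a leaf of $t$; since $t|_q$ is not a variable and $o' > p^nqq'$ is strict, this leaf cannot be a constant (nothing to point to strictly below), so $x := t|_{qq'}$ is a variable, and $x \in \VV(t|_q)\subseteq\WW$. Writing $o' = p^nqq'r$ with $r\ne\epsilon$, we have $x\mu^n|_r = \ell\sigma$. I trace $r$ through $x\mu^n$ by stepping one $\mu$-level down whenever the path enters a fresh variable, producing a chain $x = x_0, x_1, \dots, x_i$ with $x_j\in\VV(x_{j-1}\mu)$, a splitting $r = r_1\dots r_i r^*$ with $r^*$ in the non-variable part of $x_i\mu$, and each $x_j \in \WW$ by propagation (if $x_{j-1} \in \VV(t|_q\mu^m)$ then $x_j \in \VV(t|_q\mu^{m+1})$). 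Setting $u := x_i\mu|_{r^*}$ and iterating $y\mu^m|_{r'} = (y\mu|_{r'})\mu^{m-1}$ yields $u\mu^{n-i-1} = \ell\sigma$, so $\sMp u\ell \in \MM_2$ is solvable. The main obstacle will be Case~B of~(i): the length estimate ruling out $n-j > n_0$ and the careful unfolding of $t(C,\mu)^m|_{p^j u}$ into a $C\mu^j$-context wrapped around the inner term; everything else is routine position arithmetic guided by \rLE{properties}.
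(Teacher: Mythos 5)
Your proof is correct and follows essentially the same route as the paper's: the same normalization of the witness to $n \geq n_0$, the same use of \rLE{properties} to rewrite $t(C,\mu)^n|_{o'}$ as $t(C,\mu)^{n_0}|_{o''}\mu^{n-n_0}$, and the same variable-tracing argument for $\MM_2$ (which the paper delegates to the proof of Thm.~10 of \cite{RTA08} and you spell out explicitly). The only cosmetic difference is that in part (i) you split into the cases $o' \geq p^n$ and $o' < p^n$, whereas the paper proves directly that $o' \geq p^{n-n_0}$ and then handles both situations with a single uniform computation.
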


\begin{proof}
\begin{enumerate}
\item 
Assume
\rSub{fb} holds and let $n$, $o'$, $q' \in \PP$, and $\sigma$ be such
that $t(C,\mu)^n|_{o'} = \ell\sigma$, $o' \leq p^nqq'$, and $p^nq < o'o$.
If $p = \epsilon$ then $n_0 = 0$, and we define $o'' = o'$ and $k = n$. Hence,
using \rLE{properties}
\[
t(C,\mu)^{n_0}|_{o''}\mu^k 
= t|_{o''}\mu^k 
= t|_{o'}\mu^n
= t\mu^n|_{o'}
= t(C,\mu)^n|_{p^n}|_{o'}
= t(C,\mu)^n|_{\epsilon^n}|_{o'}
= t(C,\mu)^n|_{o'}
= \ell\sigma
\]
shows that the matching problem $\sMp{t(C,\mu)^{n_0}|_{o''}}\ell$ is solvable,
and since $o'' = o' \leq p^nqq' = p^{n_0}qq'$ and $p^{n_0}q = 
\epsilon^{n_0}q = \epsilon^nq = 
p^nq < o'o = o''o$ we also know that this matching 
problem is contained in $\MM_1$. Otherwise, $p \neq \epsilon$ and 
$n_0 = \lceil\frac{|o| \monus[\scriptstyle] |q|}{|p|}\rceil$.
W.l.o.g.~one can assume that $n \geq n_0$.\footnote{If $n < n_0$ then one can
replace $n$, $o'$, and $\sigma$ by $n + n_0$, $p^{n_0}o'$, and $\sigma\mu^{n_0}$.
These new values also satisfy \rSub{fb}.}
Hence, the position $p^{n - n_0}$ is well formed. Next, we prove that 
$o' \geq p^{n - n_0}$. Note that $o'$ cannot be parallel to $p^{n - n_0}$ as
$o' \leq p^nqq'$. If we had $o' < p^{n - n_0}$, then 
$|p^{n - n_0}| + |p^{n_0}q| = |p^nq| < |o'o| = |o'| + |o| < |p^{n - n_0}| + |o|$
shows that $n_0 \cdot |p| + |q| < |o|$, and hence yields the contradiction
$n_0 \cdot |p| 
= \lceil\frac{|o| \monus[\scriptstyle] |q|}{|p|}\rceil \cdot |p| 
< |o| \monus |q|$.
So there is some $o''$ such that $o' = p^{n-n_0}o''$ and since $o' \leq p^nqq' = 
p^{n-n_0}p^{n_0}qq'$ we know that $o'' \leq p^{n_0}qq'$. Moreover, as
$p^{n-n_0}p^{n_0}q = p^nq < o'o = p^{n-n_0}o''o$ we also know that $p^{n_0}q < o''o$.
Thus, $o'' \leq p^{n_0}qq'$ and $p^{n_0}q < o''o$ and hence, $\sMp{t(C,\mu)^{n_0}|_{o''}}\ell \in \MM_1$.
It remains to show that this matching problem is solvable which is established
using \rLE{properties}:
\[
t(C,\mu)^{n_0}|_{o''}\mu^{n - n_0}
= t(C,\mu)^{n_0}\mu^{n - n_0}|_{o''}
= t(C,\mu)^{n_0}(C,\mu)^{n-n_0}|_{p^{n-n_0}}|_{o''}
= t(C,\mu)^n|_{o'}
= \ell\sigma.
\]

\item 
We now assume that \rSub{fb} is satisfiable where $o' > p^nqq'$
for some maximal position $q' \in \PP$, and show that there is also some matching problem 
in $\MM_2$ that is solvable. 
So, let $n$, $o'$, $q'$, and $\sigma$ be such
that $t(C,\mu)^n|_{o'} = \ell\sigma$, $o' > p^nqq'$, $p^nq < o'o$, and $q'$
is a maximal position in $\PP$. Hence, $o' = p^nqq'o''$ for some $o'' \neq \epsilon$
and thus by \rLE{properties}, 
\[
t(C,\mu)^n|_{o'}
= t(C,\mu)^n|_{p^n}|_{qq'o''}
= t\mu^n|_{qq'o''}
= t|_{qq'}\mu^n|_{o''}.
\]
Since $q'$ was maximal and $o'' \neq \epsilon$ we know that $t|_{qq'}$ must be
a variable. Then one can show as in the 
proof of \cite[Thm.\ 10]{RTA08} that 
$t|_{qq'}\mu^n|_{o''} = u\mu^{k}$ for some $u \unlhd x\mu$, $x \in \WW$, and $k$.
Hence, $\sMp u\ell$ is a matching problem of $\MM_2$ and it is solvable since
\[
\ell\sigma = t(C,\mu)^n|_{o'}
= t|_{qq'}\mu^n|_{o''}
= u\mu^{k}.
\]

\item 
Assume that a problem in $\MM_1$ is solvable.
Hence, there exist $k$, $\sigma$, $o''$, and $q' \in \PP$ such that
$t(C,\mu)^{n_0}|_{o''}\mu^k = \ell\sigma$, $o'' \leq p^{n_0}qq'$,
and $p^{n_0}q < o''o$. 
Then we define $n = n_0 + k$ and $o' = p^ko''$ and achieve
\[
t(C,\mu)^n|_{o'} 
= t(C,\mu)^{n_0}(C,\mu)^k|_{p^k}|_{o''}  
= t(C,\mu)^{n_0}\mu^k|_{o''}  
= t(C,\mu)^{n_0}|_{o''}\mu^k
= \ell\sigma
\]
and moreover $p^nq = p^kp^{n_0}q < p^ko''o = o'o$. Hence, if one of the matching problems
in $\MM_1$ is solvable, then also \rSub{fb} holds.
 
We now assume that a matching problems in $\MM_2$ is 
solvable and show that then \rSub{fb} is satisfied. Here, we need the additional assumption
that $t|_q$ is not a variable. This assumption is not severe as we are interested in terms
$t$ where $t \to_q t'$, which implies that $t|_q$ is not a variable for well-formed 
TRSs.\footnote{It is also possible to define $\MM_2$ in a way that $t|_q$ can be a variable.
However, then the definitions would become even more technical. Essentially, one just would
have to perform some additional book-keeping to check whether one is strictly below $t|_q$.}
So, let $u$, $x$, $k$, $k'$, and $\sigma$ be given such that 
$x \in \VV(t|_q\mu^{k'})$, $u \unlhd x\mu$, and $u\mu^k = \ell\sigma$.
Let $o''$ and $o'''$ be positions such that $t|_q\mu^{k'}|_{o''} = x$ and $x\mu|_{o'''} = u$.
We define $n = k + k' + 1$ and $o' = p^nqo''o'''$ and show for these values that \rSub{fb}
is satisfied (again, using \rLE{properties}):
\[
t(C,\mu)^n|_{o'} 
= t(C,\mu)^n|_{p^n}|_{qo''o'''}
= t\mu^n|_{qo''o'''}
= t|_q\mu^{k'+1+k}|_{o''o'''}
= x\mu^{1+k}|_{o'''}
= u\mu^{k}
= \ell\sigma
\]
and $p^nq < p^nqo''o'''o = o'o$ since $o'' \neq \epsilon$. That $o''$ is
indeed non-empty
follows from the fact that $t|_q$ 
and thus also $t\mu^{k'}|_q$ is not a variable,
but $t\mu^{k'}|_{qo''} = t|_q\mu^{k'}|_{o''} = x$. 
\end{enumerate}
\end{proof}

Using \rLE{a lemma} it is now easy to derive the following theorem.

\begin{theorem}[Soundness of $(\ell,o,a)$ problems]
\label{a MPs correct}
Let $t \to_q t'$ and let $(C,\mu)$ be a context-substitution
such that $C|_p = \BBox$ and such that $t|_q$ is not a variable.
All reductions $t(C,\mu)^n \to_{p^nq} t'(C,\mu)^n$ are allowed
w.r.t.\ the pattern $(\ell,o,a)$ iff none of the matching 
problems in $\MM_1 \cup \MM_2$ is solvable.
\end{theorem}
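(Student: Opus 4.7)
The plan is to derive the theorem as a direct consequence of \rLE{a lemma}. The first step is to observe that, by the definition of rewriting with forbidden patterns, the assertion ``all reductions $t(C,\mu)^n \to_{p^nq} t'(C,\mu)^n$ are allowed w.r.t.\ $(\ell,o,a)$'' is exactly the negation of the existential statement \rSub{fb}. Via contraposition, the iff in the theorem therefore reduces to the biconditional: \rSub{fb} holds $\Leftrightarrow$ some problem in $\MM_1\cup\MM_2$ is solvable.

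The direction ``solvable $\Rightarrow$ \rSub{fb}'' is already handed to us by part (iii) of \rLE{a lemma}, which is precisely the step that uses the hypothesis that $t|_q$ is not a variable.

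For the converse direction ``\rSub{fb} $\Rightarrow$ solvable'', I would prove the following position dichotomy: for every $n$ and every $o'\in\Pos(t(C,\mu)^n)$ satisfying $p^nq<o'o$, either $o'\leq p^nqq'$ for some $q'\in\PP$, or $o'>p^nqq'$ for some \emph{maximal} $q'\in\PP$. Once the dichotomy is in hand, parts (i) and (ii) of \rLE{a lemma} cover the two cases and yield a solvable matching problem in $\MM_1$ or $\MM_2$, respectively. The dichotomy itself follows from three quick observations: $p^nq<o'o$ forces $o'$ and $p^nq$ to be comparable (otherwise their extensions $o'o$ and $p^nq$ are parallel, so neither can be a prefix of the other); if $o'\leq p^nq$ then the first alternative holds with $q'=\epsilon$, which is in $\PP$ because $q\in\Pos(t)$; and if $o'=p^nqw$ with $w\neq\epsilon$, choosing $q'$ to be the longest prefix of $w$ with $qq'\in\Pos(t)$ gives either $q'=w$ (first alternative) or $q'<w$ with $q'$ maximal in $\PP$ (second alternative).

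The main obstacle is the verification that in the last sub-case the chosen $q'$ is genuinely maximal in $\PP$, which amounts to observing that $t|_{qq'}$ must be a variable --- otherwise an extension of $q'$ along $w$ would still lie in $\PP$, contradicting the maximality of the chosen prefix. Apart from this, the argument is pure bookkeeping on top of \rLE{a lemma}.
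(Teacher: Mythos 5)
Your proposal is correct and follows essentially the same route as the paper, which derives the theorem directly from \rLE{a lemma} and leaves the reduction implicit (``it is now easy to derive''). Your explicit position dichotomy --- every $o'$ with $p^nq<o'o$ satisfies either $o'\leq p^nqq'$ for some $q'\in\PP$ or $o'>p^nqq'$ for a maximal $q'\in\PP$, with maximality justified by $t|_{qq'}$ being a variable --- supplies exactly the bookkeeping needed to make parts (i) and (ii) of the lemma exhaustive, and part (iii) gives the converse as you say.
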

Note that when encoding innermost rewriting by using forbidden patterns, the resulting
matching problems one obtains in \cite{RTA08} are essentially $\MM_1 \cup
\MM_2$.

\subsection{Deciding Loops for Forbidden Patterns of Type $(\cdot,\cdot,b)$}
\label{subsection b}
Finally, for patterns $(\ell,o,b)$, we replace the condition $p^nq =
o'o$ by $p^nq > o'o$, i.e., $o'o$ has to be strictly above the redex.
First note that $o'o \in \Pos(C^n[t])$.  Now, we consider the following
two cases: either $o'o$ ends in $t$ (i.e., $o'o \geq p^n$), or otherwise it ends in some occurrence
of $C$ (i.e., $o'o < p^n$).

In the first case there are only finitely many positions in $t$ above $q$
in which $o'o$ could end. Thus, we reduce this case to finitely many
$(\cdot,\cdot,h)$ cases. For each $\bar{q}$ above $q$ in $t$, we
consider the pattern $(\ell,o,h)$ for a reduction at position $\bar q$. Hence,
we define
\[
\MM_3 = \bigcup_{\bar{q} < q} \MM_{\bar{q}}, \text{ 
where $\MM_{\bar{q}}$ is the set of $(\ell,o,h)$ matching problems for $t$,
$\bar{q}$, and $(C,\mu)$.}
\]

In the second case  $o'o$ is a non-hole position of $C^n$, i.e.,
$p^n > o'o$.
Then $p \neq \epsilon$, since otherwise we would obtain the contradiction
$\epsilon = p^n > o'o$.
So there is a $k < n$ and a $p'''\leq p$ with $o' = p^kp'''$. Let $p''$ be the
position with $p = p'''p''$. Then we have $o < p'' p^{n_0}$ for some $n_0$.
To examine all possible choices for $o'$, we consider all prefixes $p'''$ of
$p$, i.e., all contexts $D$ with $\BBox \lhd D \unlhd C$ where
$C|_{p'''} = D$, $D|_{p''} = \BBox$,
and $p = p'''p''$. 
Let $n_0$ be the smallest
number such that $|p''| + |p^{n_0}| > |o|$ (since $p > \epsilon$, such a
number always exists).  Then we have to check whether $o < p''p^{n_0}$.
If that is not the case, then we do not result in any additional matching problems.
Otherwise, we obtain an \emph{extended matching problem}
$\eMp{D}{\ell}{C\mu}{t(C,\mu)^{n_0}\mu}$ for each $\BBox \lhd D \unlhd C$.
\begin{align*}
\MM_4 & = \{\eMp{D}{\ell}{C\mu}{t(C,\mu)^{n_0}\mu} \mid \\
 &  \phantom{ = \{} \BBox \lhd D \unlhd C, D|_{p''} = \BBox, \text{$n_0$ is least number 
  with $|p''| + n_0|p| > |o|$}, p''p^{n_0} > o\}
\end{align*}

These are the same kind of extended matching problem as for deciding
outermost loops.

\begin{definition}[Extended matching problems \cite{RTA09}]
We call a quadruple $\eMp D\ell C t$ an \emph{extended matching problem}.
It is \emph{solvable} iff there are $m$, $k$, $\sigma$, such that
$D[t(C,\mu)^m]\mu^k = \ell\sigma$.
\end{definition}

\begin{lemma}[Connection of \rSub{fb} and $\MM_3 \cup \MM_4$]
\label{b lemma}  Let $t \to_q t'$ and let $(C,\mu)$ be a context-substitution
 such that $C|_p = \BBox$.
\begin{enumerate}
\item \rSub{fb} is satisfied with $o'o \geq p^n$ iff a problem in $\MM_3$ is solvable.
\item \rSub{fb} is satisfied with $o'o < p^n$ iff a problem in $\MM_4$ is solvable.
\end{enumerate}
\end{lemma}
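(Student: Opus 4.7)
The plan is to prove the two biconditionals by splitting along the position algebra of the nested context-substitution $t(C,\mu)^n$, exploiting \rLE{properties} at every step.

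For Part (i) the forward direction is direct: from $o'o \geq p^n$ we write $o'o = p^n \bar{q}$, and from $p^nq > o'o$ we obtain $\bar{q} < q$ with $\bar{q} \in \Pos(t)$. Then $t(C,\mu)^n|_{o'} = \ell\sigma$ together with $p^n \bar{q} = o'o$ is precisely the defining condition for solvability of an $(\ell,o,h)$ matching problem at reduction position $\bar{q}$, so by \rTH{h MPs correct} some problem in $\MM_{\bar{q}} \subseteq \MM_3$ is solvable. The converse simply reverses this construction: a solution of a problem in $\MM_{\bar{q}}$ for some $\bar{q} < q$ delivers $n$, $o'$, $\sigma$ with $t(C,\mu)^n|_{o'} = \ell\sigma$ and $o'o = p^n\bar{q} \geq p^n$, while $\bar{q} < q$ gives $o'o < p^nq$.

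Part (ii) is the technical core. In the forward direction, from $o'o < p^n$ we first note $p \neq \epsilon$ and pick the largest $k$ with $p^k \leq o'$, writing $o' = p^k p'''$ with $p'''$ a proper prefix of $p$. We set $D = C|_{p'''}$ and $p''$ such that $p = p''' p''$, so $\BBox \lhd D \unlhd C$ and $D|_{p''} = \BBox$. Unfolding $t(C,\mu)^n = C[C\mu[\dots C\mu^{n-1}[t\mu^n]\dots]]$ together with \rLE{properties} yields
\begin{equation*}
t(C,\mu)^n|_{o'} \;=\; D\mu^k\bigl[\,t(C,\mu)^{n-k-1}\mu^{k+1}\,\bigr].
\end{equation*}
Setting $m = n - k - n_0 - 1$ and applying properties \rSub{prop1} and \rSub{prop2} of context-substitutions rewrites the right-hand side as $D[(t(C,\mu)^{n_0}\mu)(C\mu,\mu)^m]\mu^k$, witnessing solvability of the extended matching problem $\eMp{D}{\ell}{C\mu}{t(C,\mu)^{n_0}\mu}$. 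The hypothesis $o'o < p^n$ translates to $o < p'' p^{n-k-1}$, from which the minimality condition on $n_0$, the side condition $p''p^{n_0} > o$, and the non-negativity of $m$ all follow by elementary length estimates, so the extended matching problem lies in $\MM_4$. The backward direction reverses the computation: from a solution $(m,k,\sigma')$ of some $\eMp{D}{\ell}{C\mu}{t(C,\mu)^{n_0}\mu} \in \MM_4$ we set $n = k+n_0+m+1$ and $o' = p^k p'''$, verify $t(C,\mu)^n|_{o'} = \ell\sigma'$ via the same identity, and use $p''p^{n_0} > o$ to conclude $o'o = p^k p''' o < p^k p'' p^{n_0} \leq p^n \leq p^nq$.

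The main obstacle is the position bookkeeping for Part (ii): decomposing $o'$ along the nested context layers, and reconciling the arbitrary $n$ coming from \rSub{fb} with the minimal $n_0$ appearing in the definition of $\MM_4$. This reconciliation hinges on the length inequality $|o| < |p''| + (n-k-1)|p|$ derived from $o'o < p^n$, which guarantees $m \geq 0$. Once this bookkeeping is in place, the remaining manipulations consist solely of applying the identities of \rLE{properties}, mirroring the argument for outermost loops in \cite{RTA09} on which the notion of extended matching problem is modeled.
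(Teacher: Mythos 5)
Your proof is correct and follows essentially the same route as the paper: part (i) reduces to the $(\ell,o,h)$ case via $o'o = p^n\bar{q}$ with $\bar{q} < q$, and part (ii) decomposes $o' = p^k p'''$ with $p''' < p$, uses \rLE{properties} to rewrite $t(C,\mu)^n|_{o'}$ as $D[t(C,\mu)^{n_0}\mu(C\mu,\mu)^m]\mu^k$, and handles $n_0$, $p''p^{n_0} > o$, and $m \geq 0$ by the same length estimates. The only blemish is a typo in the final inequality, where $p^k p'' p^{n_0}$ should read $p^k p''' p'' p^{n_0}$.
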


\begin{proof}
\begin{enumerate}
\item Suppose that a $(\ell,o,h)$ matching problem in $\MM_3$ for $\bar q < q$ is
solvable. By \rTH{h MPs correct} we obtain $\bar{q}$, $m$, $o'$, and $\sigma$
with $t(C,\mu)^n|_{o'} = \ell\sigma$ and $p^n\bar{q} = o'o$. Since
$\bar{q} < q$, this implies $p^nq > o'o$ and $o'o \geq p^n$. Thus we satisfy the case of
\rSub{fb} where $\lambda = b$ and $o'o \geq p^n$. 

Conversely, assume that there are $n$, $o'$, and $\sigma$ such that
$t(C,\mu)^n|_{o'} = \ell\sigma$, $p^nq > o'o$, and $o'o \geq p^n$. Thus,
there is some $o'' \neq \epsilon$ with $p^nq = o'oo''$. Since we are in
the case where $o'o \geq p^n$, this implies that $o''$ is a suffix
of $q$. Hence, there is some position $\bar{q}$ such that $q = \bar{q}o''$ and
$p^n\bar{q} = o'o$. As $o'' \neq \epsilon$ we know that $\bar{q} < q$ and
hence, one of the $(\ell,o,h)$ matching problems in $\MM_3$
is solvable using \rTH{h MPs correct}.

\item
Suppose that an extended matching problem in $\MM_4$ is solvable. Thus there
are $m$, $k$, and $\sigma$ such that $D[t(C,\mu)^{n_0}\mu(C\mu,\mu)^m]\mu^k =
\ell\sigma$ and $p''p^{n_0} > o$. Let $o' = p^kp'''$ and $n = k + n_0 + m + 1$. Hence, by 
\rLE{properties}
\begin{align*}
t(C,\mu)^n|_{o'}
  &= t(C,\mu)^{k + n_0 + m + 1}|_{p^kp'''}
  = t(C,\mu)^{n_0 + m + 1}\mu^k|_{p'''}
  = C[t(C,\mu)^{n_0 + m}\mu]\mu^k|_{p'''} \\
  &= D[t(C,\mu)^{n_0 + m}\mu]\mu^k  
  = D[t(C,\mu)^{n_0}\mu(C\mu,\mu)^m]\mu^k
  = \ell\sigma
\end{align*}
and moreover $p^n = p^kp^{n_0}p^mp \geq p^kpp^{n_0} = p^kp'''p''p^{n_0} >
p^kp'''o = o'o$ and thus, also $p^nq > o'o$.

In order to prove the other direction, assume that
there are $n$, $o'$, and $\sigma$ such that $t(C,\mu)^n|_{o'} =
\ell\sigma$ and $p^n > o'o$. Let $k = \lfloor\frac{|o'|}{|p|} \rfloor$.
Hence, there is some $p''' < p$ such that $o' = p^kp'''$. Since $p''' < p$,
there is also some $p''$ with $p = p'''p''$. From the fact that $o'$ is a
strict prefix of $p^n$, we obtain some $m \in \mathbb{N}$ such that $p^n =
p^kp'''p''p^m = o'p''p^m$. Thus, $o'p''p^m = p^n > o'o$ which implies $p''p^m
> o$ and so, $|p''| + |p^m| > |o|$. Hence, $m$ is greater than or equal to
the smallest number $n_0$ satisfying $|p''| + |p^{n_0}| > |o|$ and thus
$m = n_0 + m'$ for some $m' \in \nats$.  From $p^n = p^kp'''p''p^m$, we
also obtain $n = k + m + 1$. Let $D = C|_{p'''}$.
\begin{align*}
\ell\sigma
  &= t(C,\mu)^n|_{o'}
  = t(C,\mu)^{k + m + 1}|_{p^kp'''}
  = t(C,\mu)^{m + 1}\mu^k|_{p'''}
  = C[t(C,\mu)^m\mu]\mu^k|_{p'''} 
  = D[t(C,\mu)^m\mu]\mu^k \\
  &= D[t(C,\mu)^{n_0 + m'}\mu]\mu^k
  = D[t(C,\mu)^{n_0}(C,\mu)^{m'}\mu]\mu^k
  = D[t(C,\mu)^{n_0}\mu(C\mu,\mu)^{m'}]\mu^k
\end{align*}
By $m'$, $k$, $\sigma$, we obtain a solution of the extended
matching problem $\eMp{D}{\ell}{C\mu}{t(C,\mu)^{n_0}\mu}$. Note that $\BBox
\lhd D$ since otherwise $p''' = p$ which contradicts $p''' < p$. Moreover, since
$p''p^m > o$ and $|p''| + |p^{n_0}| > |o|$, we have $p''p^{n_0} > o$.
Hence, the matching problem
$\eMp{D}{\ell}{C\mu}{t(C,\mu)^{n_0}\mu}$ is contained in $\MM_4$.
\end{enumerate}
\end{proof}

\noindent
Using \rLE{b lemma}, we have proved the following theorem.

\begin{theorem}[Soundness of $(\ell,o,b)$ problems]
\label{b MPs correct}
Let $t \to_q t'$ and let $(C,\mu)$ be a context-substitution
such that $C|_p = \BBox$. All reductions
$t(C,\mu)^n\to_{p^nq} t'(C,\mu)^n$ are allowed w.r.t.\ the pattern
$(\ell,o,b)$ iff none of the matching problems in $\MM_3 \cup \MM_4$ 
is solvable.
\end{theorem}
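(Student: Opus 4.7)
My plan is to derive \rTH{b MPs correct} as a direct corollary of \rLE{b lemma}, which already does essentially all of the work. First I would unfold the definition of rewriting under a forbidden pattern: a reduction $t(C,\mu)^n \to_{p^nq} t'(C,\mu)^n$ is allowed w.r.t.~$(\ell,o,b)$ precisely when no pair $(o',\sigma)$ witnesses the $\lambda = b$ clause of \rSub{fb}. Hence the left-hand side of the biconditional is equivalent to saying that \rSub{fb} is unsatisfiable for every $n \in \nats$.

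The next step is a case split on where the forbidden occurrence of the pattern can sit. Whenever \rSub{fb} is witnessed for some $n$, $o'$, $\sigma$ with $p^n q > o' o$, the position $o'o$ is a strict prefix of $p^nq$; since $p^n$ is itself a prefix of $p^nq$, the two positions $o'o$ and $p^n$ are comparable along that spine. This gives two exhaustive alternatives: either $o' o \geq p^n$, meaning the pattern occurrence sits at or inside the embedded copy of $t$, or $o' o < p^n$, meaning it sits strictly above $t$, within the iterated context stack $C^n$.

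Then I would appeal directly to \rLE{b lemma}. Part~(i) transforms the first alternative into solvability of some matching problem in $\MM_3$, and part~(ii) transforms the second alternative into solvability of some extended matching problem in $\MM_4$. Combining the two, there is an $n$ for which \rSub{fb} is satisfied iff some problem in $\MM_3 \cup \MM_4$ is solvable; contraposing both sides of this equivalence delivers the theorem.

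The main obstacle has thus already been absorbed into \rLE{b lemma}, whose proof handles the delicate position arithmetic surrounding $p^n$, $p^{n_0}$, $p''$, $p'''$ and the interaction between the context $D \unlhd C$ with $C|_{p'''} = D$ and the extended matching problems. The only residual point to check in the present proof is the exhaustiveness of the dichotomy $o' o \geq p^n$ vs.~$o' o < p^n$, which is immediate once one observes that both $o' o$ and $p^n$ are prefixes of $p^n q$ and hence linearly ordered.
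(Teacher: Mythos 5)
Your proposal is correct and follows exactly the route the paper takes: the paper also derives Theorem~\ref{b MPs correct} immediately from Lemma~\ref{a lemma}'s companion, Lemma~\ref{b lemma}, via the exhaustive dichotomy $o'o \geq p^n$ versus $o'o < p^n$ (both positions being prefixes of $p^nq$, hence comparable). Your additional remark making the exhaustiveness of that case split explicit is a sound and welcome detail that the paper leaves implicit.
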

Note that as in the innermost case, when encoding outermost rewriting by
using forbidden patterns, the resulting matching problems one obtains
in \cite{RTA09} are $\MM_3 \cup \MM_4$. So \rTH{b MPs correct} is 
a generalization of the result in \cite{RTA09}.

By combining \rCO{loops intersection} with \rTH{h MPs correct}, \rTH{a MPs
correct}, and \rTH{b MPs correct}, we 
finally obtain the following 
corollary.

\begin{corollary}[Forbidden loops are decidable]
Let $t_1 \to_{q_1} t_2 \to_{q_2} \cdots \to_{q_m} t_{m+1} = t_1(C,\mu)$ be
a loop with $C|_p = \BBox$ and let $\Pi$ be a set of forbidden patterns.
Then it is decidable whether the loop is a loop under the strategy $\Pi$.
\end{corollary}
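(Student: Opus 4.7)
The plan is to reduce the decision question to a finite conjunction of decidable non-solvability queries for (extended) matching problems, by composing the theorems already proved in this section with the decidability results of \cite{RTA08,RTA09}.

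First, I would apply \rCO{loops intersection} to the observation (made at the end of \rSC{s:strategy loops}) that $\Pito$ decomposes as the intersection, over all patterns $(\ell,o,\lambda) \in \Pi$, of the rewrite relations induced by each singleton pattern. Since $\Pi$ is finite, the given loop is a $\Pi$-loop iff it is a $\{(\ell,o,\lambda)\}$-loop for each $(\ell,o,\lambda) \in \Pi$, reducing the problem to finitely many questions for single patterns.

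Next, for a fixed single pattern $(\ell,o,\lambda)$, \rDF{Sloop} tells us that the loop is a $\{(\ell,o,\lambda)\}$-loop iff for every reduction step $t_i \to_{q_i} t_{i+1}$ of the loop and every $n \in \nats$, the lifted step $t_i(C,\mu)^n \to_{p^n q_i} t_{i+1}(C,\mu)^n$ respects the pattern. For each of the finitely many $i$, I would invoke \rTH{h MPs correct}, \rTH{a MPs correct}, or \rTH{b MPs correct} (according to $\lambda \in \{h,a,b\}$) to reduce this question to the non-solvability of a finite, effectively computable set of matching problems or extended matching problems. Solvability of both kinds of problems is decidable by \cite{RTA08,RTA09}, hence so is its negation.

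The only subtlety is that \rTH{a MPs correct} carries the side condition $t_i|_{q_i} \notin \VV$; this is automatic because $t_i \to_{q_i} t_{i+1}$ is a rewrite step and left-hand sides of TRS rules are never variables, so $t_i|_{q_i}$ has a function symbol at its root. The main obstacle is therefore purely organizational --- uniformly combining the case split on $\lambda$ with the outer loops over patterns in $\Pi$ and over the $m$ reduction steps of the loop --- since all the nontrivial analysis (matching problem generation, soundness and completeness of the reductions, and the algorithms deciding matching problems) has already been carried out in the preceding theorems and the cited works.
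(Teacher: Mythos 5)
Your proposal is correct and follows essentially the same route as the paper: decompose $\Pi$ into its finitely many single patterns via the intersection corollary, apply the soundness theorem for the relevant $\lambda \in \{h,a,b\}$ to each of the $m$ lifted reduction steps, and appeal to the decidability of (extended) matching problems from the cited earlier work. Your remark that the side condition of the $(\cdot,\cdot,a)$ theorem (that $t_i|_{q_i}$ is not a variable) is automatically satisfied for rewrite steps of a well-formed TRS is exactly the justification the paper itself gives.
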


\section{Conclusion}
\label{s:concl}

In this paper, we developed  approaches to disprove termination of
rewriting under strategies like leftmost-innermost, leftmost-outermost,
(max-)parallel-innermost, (max-)parallel-outermost, and forbidden patterns
automatically.
To this end, we introduced decision procedures which check whether a given
loop is also a loop under the respective strategy. By combining these
procedures with techniques to detect loops automatically, one obtains methods
to prove non-termination of term rewriting under these strategies.

The general idea of our decision procedures is to generate a set of
\emph{(extended) matching problems} from every loop such that one of these
matching problems is solvable iff the given loop violates the strategy. We
presented a decision problem for solvability of matching problems in
\cite{RTA08} (for extended matching problems this was done in \cite{RTA09}). 

We started with defining \emph{leftmost} matching problems in \rSC{s:leftmost loops} which
shows that it is decidable whether a loop is a leftmost loop. By combining
this result with the
decision procedures for innermost and outermost loops from \cite{RTA08,RTA09},
it is also decidable whether a loop is a
leftmost-innermost or leftmost-outermost loop.

In \rSC{s:maxpar loops} we considered parallel- and max-parallel-rewriting, where in the
latter case, \emph{all} redexes at parallel positions must be reduced simultaneously.
Similar to leftmost matching problems, here we defined
\emph{max-parallel} matching problems and showed that it is decidable whether
a given loop is also a max-parallel, a (max-)parallel-innermost, or a 
(max-)parallel-outermost loop.

Finally, in \rSC{s:forbidden loops} we extended our approach to strategies defined by
\emph{forbidden patterns} \cite{GS09}. Forbidden patterns are very expressive and
in particular, they can also be used to describe strategies such as innermost,
outermost, or context-sensitive rewriting. There are three variants of such
patterns which restrict rewriting on, above, or below certain positions of
certain subterms. For each of these classes of forbidden patterns, we showed how
to generate corresponding matching problems such that one of these matching problems is
solvable iff the given loop violates the restriction described by the pattern.
Thus, it is decidable whether a loop is also a loop under a strategy
expressed by a set of forbidden patterns.

Our results constitute the first automatic approach for disproving termination
under these strategies. Future work will be concerned with extending and
adapting our results such that they can be integrated in rewriting-based
approaches for termination analysis of programming languages (e.g.,
\cite{TOCL,TOPLAS,RTA10}).

\vspace*{.2cm}

\noindent
\textbf{Acknowledgments.}
We thank the referees for many helpful suggestions.

\vspace*{-.35cm}

\bibliographystyle{eptcs}
\bibliography{referencesIWS}

\end{document}